\newcommand{\mathsym}[1]{{}}
\newcommand{\unicode}[1]{{}}
\theoremstyle{plain}
\newtheorem{theorem}{Theorem}
\newtheorem{corollary}[theorem]{Corollary}
\newtheorem{proposition}[theorem]{Proposition}
\theoremstyle{definition}
\theoremstyle{remark}
\newtheorem{remark}[theorem]{Remark}
\renewcommand{\geq}{\geqslant}
\numberwithin{equation}{section}
\numberwithin{theorem}{section}
\begin{document}

%\setlength{\headheight}{20pt}

%\begin{titlepage}
\title[]{Expanding the Fourier transform of the scaled 
circular Jacobi $\beta$ ensemble density}

\author{Peter J. Forrester}
\address{School of Mathematics and Statistics, 
University of Melbourne, Victoria 3010, Australia}
\email{pjforr@unimelb.edu.au}

\author{Bo-Jian Shen}
\address{School of Mathematical Sciences, Shanghai Jiaotong University, CMA-Shanghai, Shanghai 200240, People's Republic of China}
\email{JOHN-EINSTEIN@sjtu.edu.cn}

\date{\today}

%\eads{\mailto{pjforr@unimelb.edu.au}}

\begin{abstract}
The family of circular Jacobi $\beta$ ensembles has a singularity of a type associated with Fisher and Hartwig in the theory of Toeplitz determinants. Our interest is in the Fourier transform of the corresponding bulk scaled spectral density about this singularity, expanded as a series in the Fourier variable. Various integrability aspects of the  circular Jacobi $\beta$ ensemble are used for this purpose. These include linear differential equations satisfied by the scaled spectral density for $\beta = 2$ and $\beta = 4$, and the loop equation hierarchy. The polynomials in the variable $u=2/\beta$ which occur in the expansion coefficents are found to have special properties analogous to those known for the structure function of the circular $\beta$ ensemble, specifically in relation to the zeros lying on the unit circle $|u|=1$ and interlacing. Comparison is also made with known results for the expanded Fourier transform of the density about a guest charge in the two-dimensional one-component plasma.
\end{abstract}

%\subjclass[2000]{15A52, 33C45, 33E17, 42C05, 60K35, 62E15}
%\ams{15A52, 33C45, 33E17, 42C05, 60K35, 62E15}
%\keywords{random matrices, eigenvalue distribution, Wishart matrices, Painlev\'e equations, isomonodromic deformations}
%\noindent{\it Keywords\/}: random matrices, eigenvalue distribution, Wishart matrices, Painlev\'e equations, isomonodromic deformations.

%\end{titlepage}
\maketitle

%\tableofcontents

\section{Introduction}\label{S1}
In random matrix theory, the joint eigenvalue probability density function (PDF) proportional to
\begin{equation}\label{1.1}
\prod_{l=1}^N w^{(\rm cJ)}(\theta_l) \prod_{1 \le j < k \le N} | e^{i \theta_k} - e^{i \theta_j} |^\beta, \qquad w^{(\rm cJ)}(\theta) = e^{q \theta} |1 + e^{i \theta} |^{\beta p} \: \: (-\pi < \theta \le \pi),
\end{equation}
is referred to as the generalised circular Jacobi $\beta$ ensemble. This reduces to the PDF specifying the circular $\beta$ ensemble (see \cite[\S 2.8]{Fo10}) upon setting $p=q=0$.
With $q=0$ but $p \ne 0$, an interpretation of the factor $ |1 + e^{i \theta} |^{ \beta p}$ is as being due to a spectrum singularity of degeneracy $p$ at $\theta = \pm \pi$, and the resulting PDF has
been termed the circular Jacobi ensemble; see \cite[\S 3.9]{Fo10}. With $q \ne 0$ also, the factor $e^{q \theta}$ is discontinuous comparing values at $\theta = \pm \pi$, extending the spectrum
singularity to the class introduced by Fisher and Hartwig in the theory of asymptotics of Toeplitz determinants \cite{FH68}. We remark that generally the study of Fisher-Hartwig singularities in random matrix theory and its applications is active to this present day, with recent references including \cite{DIK13,FS16,BWW18,WW19,XZ20,CG21,DS22,BF22,CGMY22,BC22,AAW23}. The naming ``circular Jacobi" comes from the fact
that the orthogonal polynomials in $z = e^{i \theta}$ with respect to the inner product $(f,g) = \int_{-\pi}^\pi f(\theta) \overline{g(\theta)} w^{(\rm cJ)}(\theta) |_{\beta = 2}
\, d \theta$ are given by the 
hypergeometric polynomials $\{ {}_2 F_1 (-n,b,b+\bar{b};1-z) \}$, where $b = p-iq$, which in term
can be written in terms of 
Jacobi polynomials \cite{SR10}.

The PDF (\ref{1.1}) is an example of a $\beta$ ensemble with the eigenvalues supported on the unit circle in the complex plane. In contrast, a $\beta$ ensemble on the real line
is specified by a joint eigenvalue PDF of the form
\begin{equation}\label{1.2}
\prod_{l=1}^N w(x_l) \prod_{1 \le j < k \le N} | x_k - x_j |^\beta.
\end{equation}
Generally an Hermitian matrix $H$ can be constructed from a unitary matrix $U$ by way of the Cayley transform
$$
H = i {\mathbb I_N - U \over \mathbb I_N + U }.
$$
This was introduced into random matrix theory by Hua \cite{Hu63}. The corresponding mapping of the eigenvalues is
\begin{equation}\label{1.2a}
x = i {1 - e^{i \theta} \over 1 + e^{i \theta} } = \tan (\theta/2),
\end{equation}
which is recognised as specifying a stereographic projection between the unit circle and the real line.
Applying the inverse of this mapping, up to proportionality the PDF (\ref{1.1}) transforms to the functional form (\ref{1.2}) with
\begin{equation}\label{1.2b}
w(x) = {1 \over (1 - i x)^c (1 + i x)^{\overline{c}}}, \qquad c = \beta(N + p -1)/2 + 1  - iq,
\end{equation}
which is then referred to as the generalised Cauchy $\beta$ ensemble; see \cite[Eq.~(3.124)]{Fo10}.

It has recently been pointed out that there is also a relationship between the generalised Cauchy $\beta$ ensemble and the
Jacobi $\beta$ ensemble, defined by the PDF (\ref{1.2}) with weight
\begin{equation}\label{1.2c}
w(x) = x^{\lambda_1} (1 - x)^{\lambda_2} \chi_{0 < x < 1},
\end{equation}
where $\chi_A = 1$ for $A$ true, $\chi_A = 0$ otherwise. Thus for $f$ a multivariable polynomial, one has that
\cite{FR21}
\begin{equation}\label{1.2d}
\langle f(1 - i x_1, \dots, 1 - i x_N) \rangle^{(\rm Cy)} = \langle f(2x_1,\dots, 2x_N) \rangle^{(\rm J)} \Big |_{\lambda_1 = - \beta(N+p-1)/2 - 1 + iq\atop
\lambda_2 = - \beta(N+p-1)/2 - 1 - i  q },
\end{equation}
where the average on the RHS is to be understood in the sense of analytic continuation. 
The identity (\ref{1.2d}) was used in \cite{FR21} to obtain linear differential
equations satisfied by the density in the Cauchy ensemble with $\beta = 1,2$ and 4, and to study the corresponding moments.
This line of study was initiated in  \cite{ABGS20}.
Another application of (\ref{1.2d}) was given in \cite{Fo21}, where it was used to study the distribution of the trace in the Cauchy $\beta$ ensemble; see also
\cite{AGS21}.

Applying the mapping (\ref{1.2a}) and recalling the theory of the previous paragraph gives the relationship between averages in the circular
Jacobi and Jacobi ensembles
\begin{equation}\label{1.6a}
\Big \langle f \Big ( {1 \over 1 + e^{i \theta_1}},  \dots,  {1 \over 1 + e^{i \theta_N}} \Big ) \Big \rangle^{(\rm cJ)} = \langle f(x_1,\dots, x_N) \rangle^{(\rm J)} \Big |_{\lambda_1 = - \beta(N+p-1)/2 - 1 + iq\atop
\lambda_2 = - \beta(N+p-1)/2 - 1 - i  q }.
\end{equation}
Note that unlike (\ref{1.2d}), (\ref{1.6a}) no longer inter-relates averages of polynomials. On the other hand, there is a further
inter-relation which does have this property, namely \cite[corollary of Prop.~3.9.1]{Fo10}
\begin{equation}\label{1.6b}
\langle f( -e^{i \theta_1}, \dots, -e^{i \theta_N}) \rangle^{(\rm cJ)} = \langle f( x_1,\dots,  x_N) \rangle^{(\rm J)} \Big |_{\lambda_1 = - \beta(N+p-1)/2 - 1 - iq\atop
\lambda_2 = \beta p \hspace*{1.85cm}},
\end{equation}
again valid in the sense of analytic continuation. Note that shifting the integration variables $\theta_j \mapsto \theta_j + \pi$ on the LHS of (\ref{1.6b}) removes the minus signs in the arguments of $f$ to give
\begin{equation}\label{1.6b+}
\langle f( e^{i \theta_1}, \dots, e^{i \theta_N}) \rangle^{(\widetilde{\rm cJ})} = \langle f( x_1,\dots,  x_N) \rangle^{(\rm J)} \Big |_{\lambda_1 = - \beta(N+p-1)/2 - 1 + iq\atop
\lambda_2 = \beta p \hspace*{1.85cm}},
\end{equation}
where $\widetilde{\rm cJ}$ refers to the circular Jacobi $\beta$ ensemble now defined on $0\le \theta_j < 2 \pi$ with weight
\begin{equation}\label{1.6c}
w^{(\widetilde{\rm cJ})}(\theta) = e^{q (\theta - \pi)} |1 - e^{i \theta} |^{\beta p}.
\end{equation}

It is clear from the above discussion that known properties of the Cauchy and Jacobi $\beta$ ensembles have application in the study of
the generalised circular Jacobi $\beta$ ensemble. For example, some time ago \cite{FN01} the full set of $k$-point correlation functions for (\ref{1.1})
in the case $q=0$ and $\beta = 1,2,4$, and their scaling limit about the spectrum singularity, were deduced from knowledge of the corresponding $k$-point correlation functions in the case
of the Cauchy ensemble. More recently  \cite{FLT21} this analysis was repeated with the restriction to $q=0$ removed, and moreover the leading finite size corrections
to the scaling limit were determined.

 In the present paper we  focus our attention on integrable structures in relation to the density for the generalised circular Jacobi $\beta$ ensemble,
 $\rho_{(1),N}^{(\widetilde{\rm cJ)}}(\theta;\beta,p,q)$ say,
 and the analogue of the corresponding moments. The latter are the coefficients $\{ c_k^{(\widetilde{\rm cJ)}}(N,\beta,p,q) \}$ of the Fourier series for the density, 
\begin{equation}\label{1.2e} 
\rho_{(1),N}^{(\widetilde{\rm cJ})}(\theta;\beta,p,q) = {1 \over 2 \pi} \sum_{k = - \infty}^\infty c_k^{(\widetilde{\rm cJ)}}(N,\beta,p,q)   e^{-i k \theta}.
\end{equation}
Of particular interest is the scaled limit
\begin{equation}\label{1.2h} 
 c_\infty^{(\widetilde{\rm cJ})}(\tau;\beta,p,q)  =  \lim_{N, k \to \infty }      c_k^{(\widetilde{\rm cJ})}(N,\beta,p,q)  \Big |_{ \tau = 2 \pi k / N}, \qquad (\tau \ne 0),
   \end{equation} 
   which is expected to be well defined for general real $\tau$.
The integrable structures come from various sources, including the interplay with the Cauchy and Jacobi $\beta$ ensembles.

To see why this is of interest, we begin by noting that
 the density in the neighbourhood of the spectrum
 singularity admits a well defined large $N$ limit, first studied for $q=0$ and $\beta = 1,2$ and 4 in \cite{FN01}, and extended to all even $\beta$ 
 and general $q$ in \cite{Li17, FLT21}. 
 For general $\beta > 0$ this can also be regarded as a bulk scaling limit, with origin at the spectrum singularity,
 specified by taking $N \to \infty$ and the angular coordinates $\theta_j$ scaled $\theta_j \mapsto 2 \pi x_j/ L$ so that the mean spacing between
 eigenvalues $L/N$ is a constant, which we take to be unity by choosing $L=N$.
 The Fourier series
 for finite $N$, upon subtracting the constant $c_0^{(\rm cJ)} = 2 \pi N$, then tends to a well defined Fourier transform relating to $ c_\infty^{(\widetilde{\rm cJ})}$.
 Thus the scaling limit of the density in the  neighbourhood of the spectrum singularity is specified by
 \begin{equation}\label{1.2f} 
 \rho_{(1),\infty}^{(\rm cJ)}( x ;\beta,p,q)  :=  \lim_{N \to \infty} {2 \pi \over N}  \rho_{(1),N}^{(\rm cJ)}( -\pi \, {\rm sgn} (x)  + 2 \pi x/ N ;\beta,p,q), 
 \end{equation}
 or in the setting of the weight (\ref{1.6c})
 \begin{equation}\label{1.2f+} 
 \rho_{(1),\infty}^{(\widetilde{\rm cJ})}( x ;\beta,p,q)  :=  \lim_{N \to \infty} {2 \pi \over N}  \rho_{(1),N}^{(\widetilde{\rm cJ})}( 2\pi \chi_{x < 0}  + 2 \pi x/ N ;\beta,p,q).
 \end{equation}
 Applying this limit to the Fourier series (\ref{1.2e}) we obtain the Fourier transform expression
  \begin{equation}\label{1.2g} 
   \rho_{(1),\infty}^{(\widetilde{\rm cJ})}( x ;\beta,p,q)  - 1 =  {1 \over 2 \pi }   \int_{-\infty}^\infty c_\infty^{(\widetilde{\rm cJ})}(\tau;\beta,p,q) e^{-i \tau x} \, d \tau,
  \end{equation}  
  where $ c_\infty^{(\widetilde{\rm cJ})}$ is given by (\ref{1.2h}), and thus by Fourier inversion
    \begin{equation}\label{1.2g+} 
 c_\infty^{(\widetilde{\rm cJ})}(\tau;\beta,p,q)  =      \int_{-\infty}^\infty    \Big ( \rho_{(1),\infty}^{(\widetilde{\rm cJ})}( x ;\beta,p,q)  - 1 \Big )  e^{i \tau x} \, d x.
  \end{equation}

 In the case $q=0$, $p = 1$, (\ref{1.2g+}) can be identified with the bulk structure
 function for the limiting statistical state defined by (\ref{1.1}) with $p=q=0$ \cite{FJM00,Fo21a,CES21}. 
 Thus for finite $N$ the structure function $S_N(k;\beta)$ can be defined by the covariance of the pair of linear statistics
 $\sum_{j=1}^N  e^{i k \lambda_j}$, $\sum_{j=1}^N  e^{-i k \lambda_j}$,
 \begin{equation}\label{1.3}
 {S}_N(k;\beta) = \Big \langle \Big | \sum_{j=1}^N e^{i k \lambda_j} \Big |^2 \Big \rangle - \bigg |
   \Big \langle\sum_{j=1}^N e^{i k \lambda_j}  \Big \rangle \bigg |^2,
 \end{equation}
 where the averages are with respect to (\ref{1.1}) in the case $p=q=0$.
 In the bulk scaling limit,  this can be written in terms of the truncated (connected) two-particle correlation   function
 $\rho_{(2), \infty}^T(x_1, x_2)$ according to
 \begin{equation}\label{1.3a}
 {S}_\infty(\tau;\beta)   :=  \lim_{N, k \to \infty \atop \tau \: {\rm fixed}}  {S}_N(k;\beta)  \Big |_{\tau = 2 \pi k / N}= \int_{-\infty}^\infty \Big (  \rho_{(2), \infty}^T(x, 0) + \delta(x) \Big ) e^{i\tau x} \, dx.
  \end{equation}
  But by the definition of the truncated two-particle correlation   function in the case $p=q=0$ we have
  \begin{equation}\label{1.3b} 
  \rho_{(2),  \infty}^T(x, 0) =    \rho_{(1),\infty}^{(\widetilde{\rm cJ})}( x ;\beta,1,0)  - 1,
  \end{equation} 
  where the RHS  comes about since  one of the fixed eigenvalues involved in the definition of the two-particle correlation induces
  a spectrum singularity of degeneracy $1$. Comparing (\ref{1.3a}) with the substitution (\ref{1.3b}) to (\ref{1.2g+}) then tells us that
   \begin{equation}\label{1.3c}  
   {S}_\infty(\tau;\beta)  -1  =       c_\infty^{(\widetilde{\rm cJ})}(\tau;\beta,1,0). 
  \end{equation}   

In relation to the finite $N$ structure function for general random matrix ensembles, one remarks that the two-point quantity $ \langle  | \sum_{j=1}^N e^{i k \lambda_j}  |^2 \rangle$ on the RHS of (\ref{1.3}) has been the subject of some recent attention in literature applying random matrix theory to the study of many body quantum chaos \cite{CHLY17,TGS18} and the scrambling of information in black holes 
 \cite{C+17,CMS17}; many more references could be given. Of interest is its graphical shape as a function of $k$, which assuming a non-constant spectral density exhibits dip-ramp-plateau features as $k$ varies as a function of $N$, with corresponding dynamical significance. Following on from the applications, several theoretical works have quantified dip-ramp-plateau for various model random matrix ensembles. With such concerns being in common with present work when viewed broadly, we give a comprehensive list of references: \cite{Ok19,Fo21a,Fo21b,MH21,CES21,VG21,VG22,FKLZ23}

  Returning now to (\ref{1.3a}), we know from  \cite{FJM00}  that the quantity
\begin{equation}\label{tR2}
f(\tau;\beta) : = {\pi \beta \over |\tau|} S_\infty(\tau;\beta), \qquad 0 < \tau < {\rm min} \, (2 \pi, \pi \beta),
\end{equation}
extends to an analytic function of $\tau$ for $|\tau| < {\rm min} \, (2 \pi, \pi \beta)$ and furthermore
satisfies the functional equation
\begin{equation}\label{tR3}
f(\tau;\beta) = f \Big ( - {2\tau \over \beta} ; {4 \over \beta} \Big ).
\end{equation}
 The power series expansion of $f(\tau;\beta)$ in (\ref{tR3}) is known to have the form
   \cite{FJM00,WF15,Fo21a}
\begin{equation}\label{1.17a}
f(\tau;\beta) = 1 + \sum_{j=1}^\infty p_j(2/\beta) (\tau/ 2 \pi)^j
\end{equation} 
where
 \begin{equation}\label{1.17b}  
 p_j(u) = \left \{ \begin{array}{ll} (1 - u)^2 \sum_{l=0}^{j-2} b_{j,l} u^l \qquad (b_{j,0} = 1, \, b_{j,l} = b_{j,j-2-l}) \: j \, {\rm even} \\
 (1 - u) \sum_{l=0}^{j-1} b_{j,l} u^l \qquad (b_{j,0} = 1, \, b_{j,l} = b_{j,j-1-l}) \: j \, {\rm odd.}  \end{array} \right.
 \end{equation} 
Thus the coefficient of $\tau^j$ is a palindromic or anti-palindromic polynomial in $u=2/\beta$. Note that this structure is consistent
with (\ref{tR3}). The explicit form of $p_j(u)$ has been determined up to and including $j=9$ in \cite{FJM00,WF15}, while the explicit
form of $p_{10}(u)$ was calculated recently in \cite{Fo21a}.

The equality (\ref{1.3c}) tells us that $ (\pi \beta / |\tau| ) (c_\infty^{(\widetilde{\rm cJ})}(\tau;\beta,1,0) +1 )$ shares the same expansion
in $\tau$ as $f(\tau;\beta)$. In light of this, we ask about the small $\tau$ expansion of $c_\infty^{(\widetilde{\rm cJ})}(\tau;\beta,p,q)$ for general
$p$ and $q$. Analogous to the expansion (\ref{1.17a}), we will show that there are
coefficients $\{ h_j(\beta,p,q)\}$ and $\{ \tilde{h}_j(\beta,p,q) \}$ such that
 \begin{equation}\label{1.h}
 c_\infty^{(\widetilde{\rm cJ})}(\tau;\beta,p,q)   = \sum_{j=0}^\infty  h_j(\beta,p,q) \tau^{j} {\rm sgn} \, \tau
  +   \sum_{j=0}^\infty \tilde{h}_j(\beta,p,q)  \tau^{j} , \qquad |\tau| < |\tau^*|,
  \end{equation} 
  where $|\tau^*|$ is the radius of convergence of the series. We are most interested in properties of
  the coefficients as a function of $\beta, p$ and $q$.  As a beginning, we remark that the symmetry of 
  the discontinuous factor of (\ref{1.1}) in the neighbourhood of $\theta = \pm \pi$, namely $e^{q(-\pi+\theta)}$ $(\theta > 0)$
  and $e^{q(\pi+\theta)}$  $(\theta < 0)$,
 being  unchanged by the mapping $\theta \mapsto - \theta$  and $q \mapsto - q$,  implies
  $ \rho_{(1),\infty}^{(\rm cJ)}( x ;\beta,p,q) =  \rho_{(1),\infty}^{(\rm cJ)}( -x ;\beta,p,-q)$. Hence from (\ref{1.2g+}) we have the functional properties
   \begin{equation}\label{1.2gz}     
    h_{j}(\beta,p,q) =  (-1)^{j+1}  h_{j}(\beta,p,-q), \qquad  
     \tilde{h}_{j}(\beta,p,q) = (-1)^j   \tilde{h}_{j}(\beta,p,-q).
   \end{equation} 
   In addition, the fact that $ \rho_{(1),\infty}^{(\rm cJ)}( x ;\beta,p,q)$ is real tells us that 
\begin{equation}\label{1.2gz+}     
    \overline{h_{j}(\beta,p,q)} =  
  (-1)^{j+1} h_{j}(\beta,p,q)
   \qquad  \overline{\tilde{h}_{j}(\beta,p,q)} =   (-1)^j \tilde{h}_{j}(\beta,p,q).
   \end{equation}
   It follows from these equations together that $h_j$ alternates pure imaginary then real, while $\tilde{h}_j$ alternates real and then pure imaginary.
  
 Further motivation for our interest in (\ref{1.2e}) is that the case $q=0$  has previously been the subject of some literature for the corresponding two-dimensional generalisation of (\ref{1.1}),
which is up to proportionality specified by the PDF
\begin{equation}\label{1.1x}
\prod_{l=1}^N | z_l |^{\beta Q} e^{- \beta \pi | z_l |^2/2} \prod_{1 \le j < k \le N} | z_k - z_j |^\beta, \qquad z_l  \in \mathbb C.
\end{equation}
The PDF (\ref{1.1x}) has the interpretation as the Boltzmann factor for the two-dimensional one-component plasma, consisting
of $N$ mobile two-dimensional unit charges, repelling via the pair potential $- \log | z - z'|$, with a smeared out uniform neutralising background
of charge density $-1$ in the region $|z| < \sqrt{N/\pi}$; see \cite[\S 4]{BF22a} for a recent review. At the origin is a so-called host charge (or impurity) of strength $Q$.

Let $\rho_{(1),\infty}^{\rm OCP}(\mathbf r; Q)$ denote the density at point $\mathbf r$ in the limit $N \to \infty$ of (\ref{1.1x}). Charge neutrality of the screening cloud
implies
\begin{equation}\label{R1S}
 \int_{\mathbb R^2} ( \rho_{(1),\infty}^{\rm OCP}(\mathbf r; Q)  - 1) \, d \mathbf r = - Q. 
\end{equation}
Two distinct derivations have been given for a sum rule specifying the second moment of the screening cloud \cite{Sa07,JS08} 
 \begin{equation}\label{R2}
  \int_{\mathbb R^2} ( \rho_{(1),\infty}^{\rm OCP}(\mathbf r; Q)  - 1 )  | \mathbf r |^2  \, d \mathbf r =
-    {2 \over \pi \beta} \Big (
(1 - \beta/4) + (\beta/4) Q \Big )  Q .
\end{equation} 
Most recently, under the assumption of analyticity in $Q$, the fourth moment of the screening cloud has been shown to obey \cite{Sa19}
\begin{equation}\label{R3}
 \int_{\mathbb R^2} ( \rho_{(1),\infty}^{\rm OCP}(\mathbf r; Q)  - 1 )  | \mathbf r |^4  \, d \mathbf r =  {1 \over (\pi \beta)^2 }     \Big (
 b_0(\beta) + b_1(\beta) Q + b_2(\beta) Q^2 \Big )  Q,
\end{equation} 
where
 \begin{equation}
  b_0(\beta) = - (\beta - 6) (\beta - 8/3), \qquad    b_1(\beta) =  {2 \over 3} \beta (2 \beta - 7), \qquad    b_2(\beta) =  - {1 \over 3} \beta^2.
\end{equation}  
Earlier \cite{KMST00,CLW15} the polynomial $b_0(\beta)/\beta^2$ appeared as the coefficient of $\mathbf k|^6$ in the expansion of the structure function $S_\infty^{\rm OCP}(\mathbf k;\beta)$.
From the spherical symmetry of $ \rho_{(1)}^{\rm OCP}(\mathbf r; Q)  - 1 $ we have for its Fourier transform
 \begin{equation}\label{1.34}
  \int_{\mathbb R^2} ( \rho_{(1),\infty}^{\rm OCP}(\mathbf r; Q)  - 1) e^{i \mathbf k \cdot \mathbf r  } \,  d \mathbf r  =
  \sum_{j=0}^\infty {(-1)^j \over (j!)^2} \Big ( {|\mathbf k |^2 \over 4} \Big )^j 
   \int_{\mathbb R^2} ( \rho_{(1),\infty}^{\rm OCP}(\mathbf r; Q)  - 1) | \mathbf r |^{2j} \, d \mathbf r,
  \end{equation} 
and so the results (\ref{R1})--(\ref{R3}) give the small $|\mathbf k |$ expansion up and including order $|\mathbf k|^4$, and exhibit a polynomial
structure in $Q$ and $4/\beta$,
 \begin{multline}\label{1.ps}
  \int_{\mathbb R^2} ( \rho_{(1),\infty}^{\rm OCP}(\mathbf r; Q)  - 1) e^{i \mathbf k \cdot \mathbf r  } \,  d \mathbf r  = - Q + {|\mathbf k |^2 \over 2 \pi \beta} \Big (
  (1 - \beta/4) + (\beta/4) Q \Big ) Q  \\
  +    {|\mathbf k |^4 \over 64  (\pi \beta)^2}    \Big (
 b_0(\beta) + b_1(\beta) Q + b_2(\beta) Q^2 \Big )  Q + O(|\mathbf k |^6).
 \end{multline}
 At order $|\mathbf k|^6$ there is evidence that the polynomial structure in $4/\beta$ breaks down, at least for the coefficient of $Q$ as the latter is equivalent to the coefficient of $|\mathbf k|^8$ in the small $|\mathbf k|$ expansion of the structure function $S_\infty^{\rm OCP}(\mathbf k;\beta)$ (\cite{Sa19}; see also Section \ref{S5} below) which is expected to consist of an infinite series in powers of $4/\beta$ 
\cite{KMST00}.
In contrast, in \cite{Sa19} it is argued that the coefficient of $| \mathbf k|^{2j}$ is always a polynomial in $Q$ of degree $j+1$.

Our study of the Fourier transform (\ref{1.2g}) and its small $\tau$ expansion (\ref{1.h}) begins in Section \ref{S2} where we specialise to $\beta = 2$. By doing this use can be made of a third order linear differential equation which then characterises $\rho_{(1),\infty}^{\rm (cJ)}$. Our main finding is a second order difference equation for the odd indexed expansion coefficients of the non-analytic terms in (\ref{1.h}) --- see Proposition \ref{P2.12}.
For $p$ a positive integer the implied sequence terminates for indices great than $2p-1$.
The even indexed expansion coefficients of the non-analytic terms all vanish as does the coefficients of the analytic terms vanish except for $\tilde{h}_0$. The considerations of Section \ref{S2} are repeated in Section \ref{S3}, but now for $\beta = 4$. This is possible due to the derivation of an explicit fifth order linear differential equation characterising $\rho_{(1),\infty}^{\rm (cJ)}$. Proposition \ref{P3.2} then gives a fourth order difference equation satisfied by the expansion coefficients in (\ref{1.h}), which in general are all non-zero. A different approach to studying the expansion coefficients of the Fourier transform is introduced in Section \ref{S4}. This is to derive and apply a hierarchical set of equations referred to as loop equations for the generating function of the Fourier coefficients. In this method $\beta > 0$ is arbitrary. However the computational complexity increases with the order of the expansion coefficient, which restricts the number of terms which can be computed. Five orders are specified in Proposition \ref{P4.2}. A discussion of the explicit functional forms obtained by the loop equation analysis, in the context of known properties of the expansion coefficients of the structure function $S_\infty(\tau;\beta)$, is given in Section \ref{S5}.

\section{The case $\beta = 2$}\label{S2}
\subsection{The confluent hypergeometric kernel and a Bessel kernel}
Denote the normalised form of the PDF proportional to (\ref{1.1}) by $p_{N,\beta}^{(\rm cJ)}(\theta_1,\dots, \theta_N)$. The corresponding $k$-point correlation
function, $\rho_{(N), k}^{(\rm cJ)}(\theta_1,\dots, \theta_k)$ say, is up to normalisation obtained by integrating over all but the first $k$ of the variables in $p_{N,\beta}^{(CJ)}$,
 \begin{equation}\label{2.1}
\rho_{(k), N}^{(\rm cJ)}(\theta_1,\dots, \theta_k) = N (N - 1) \cdots (N - k + 1) \int_{(-\pi, \pi]^{N-k}} d \theta_{k+1} \cdots d \theta_N \,  p_{N,\beta}^{(\rm cJ)}(\theta_1,\dots, \theta_N).
  \end{equation} 
  As relevant for the bulk scaling defined in the paragraph including (\ref{1.2f}), in this we replace $\{\theta_j \}$ in favour of $\{ x_j \}$ and further define
  \begin{multline}\label{2.1a} 
  {\rho}_{(k), \infty}^{(\rm cJ)}(x_1,\dots, x_k;\beta,p,q) \\ =  \lim_{N \to \infty} \Big ( {2 \pi \over N} \Big )^k \rho_{(k), N}^{(\rm cJ)}(-\pi \, {\rm sgn} (x_1) + 2 \pi x_1/N,\dots, -\pi \, {\rm sgn} (x_k) + 2 \pi x_k/N).
 \end{multline}    
  
The case $\beta = 2$ of (\ref{2.1}) and  (\ref{2.1a}) is special. The $k$-point correlation function then has a determinantal structure, with the elements
of the determinant moreover independent of $k$. Specifically, in relation to (\ref{2.1a}) we have
 \begin{equation}\label{2.1a1} 
  {\rho}_{(k), \infty}^{(\rm cJ)}(x_1,\dots, x_k;\beta,p,q) \Big |_{\beta = 2} =  \det \Big [ {K}_\infty^{(p,q)}(x_j, x_l) \Big ]_{j,l = 1,\dots, k},
 \end{equation} 
 where \cite[after the change of variables $1/x_l \mapsto \pi x_l$]{BO01} (see also the introduction of \cite{DKV11} for further references, and  the recent work \cite{FLT21} for an independent derivation)
 \begin{multline}\label{2.1a+} 
 {K}_\infty^{(p,q)}(x, y)     = {(2 \pi)^{2p} | \Gamma(p+1 - i q) |^2 \over \pi  \Gamma(2p+2) \Gamma(2p+1)} e^{- i \pi (x + y) - q \pi({\rm sgn} \,x + {\rm sgn} \, y)/2} { (xy)^{p+1} \over x^2 (x - y)} \\
 \times \Big ( x \, {}_1 F_1(p+1-iq; 2p+2; 2i \pi x)  {}_1 F_1(p-iq; 2p; 2i \pi y)  - (x \leftrightarrow y) \Big ).
 \end{multline}
 Here ${}_1 F_1(a;c;z)$ denotes the confluent hypergeometric function in standard notation. In particular, taking the limit $ y \to x$ using L'H\^{o}pital's rule gives for the bulk scaled density
   \begin{multline}\label{2.1b}
   {\rho}_{(1), \infty}^{(\rm cJ)}(x;\beta,p,q) \Big |_{\beta = 2}  \\=  {(2 \pi)^{2p} | \Gamma(p+1 - i q) |^2 \over \pi \Gamma(2p+2) \Gamma(2p+1)}   e^{-  2 i \pi x  - q \pi{\rm sgn \, x}}  |x|^{2p}  \bigg ( {d \over d x} \Big (  x \, {}_1 F_1(p+1-iq; 2p+2; 2i \pi x) \Big ) \\
   \times {}_1 F_1(p-iq; 2p; 2i \pi x)  -   x \, {}_1 F_1(p+1-iq; 2p+2; 2i \pi x)  {d \over d x}  {}_1 F_1(p-iq; 2p; 2i \pi x)  \bigg ).
     \end{multline}  

     \begin{remark} Use of the Kummer transformation ${}_1 F_1(a,b;z) = e^z {}_1 F_1(b-a,b;z)$ shows that (\ref{2.1b}) is unchanged by the mapping $x \mapsto -x, q \mapsto - q$ as noted in the discussion above (\ref{1.2gz}).
     \end{remark}
 
 \subsection{The case $q = 0$ --- Bessel function asymptotics}
Use of 
 the expression for a particular ${}_1 F_1$ in terms of a Bessel function,
  \begin{equation}\label{2.1c} 
 {}_1 F_1 (p;2p;2iX) = \Gamma(p+1/2) \left( {X \over 2} \right)^{-p+1/2} e^{iX} J_{p-1/2}(X), 
  \end{equation} 
  shows \cite{NS93}
   \begin{equation}\label{2.1aq} 
    {K}_\infty^{(p,q)}(x, y) \Big |_{q=0}  
    = \pi
(xy)^{1/2} { J_{p+1/2}(\pi x)J_{p-1/2}(\pi y)-J_{p-1/2}(\pi x)J_{p+1/2}(\pi y) \over 2(x-y) }.
\end{equation}
Taking the limit $y \to x$ and making use of Bessel function identities then gives \cite[Eq.~(7.49) with $\rho = 1$, $a = p$]{Fo10}
  \begin{equation}\label{2.1aq+} 
 {\rho}_{(1), \infty}^{(\rm cJ)}(x;\beta,p,q) |_{\beta = 2 \atop q=0} =  {\pi^2 |x| \over 2} \Big (
 (J_{p-1/2}(\pi x))^2 +  (J_{p+1/2}(\pi x))^2 - {2 p \over \pi x} J_{p-1/2}(\pi x) J_{p+1/2}(\pi x) \Big ).
 \end{equation}
As a check, it follows from (\ref{2.1aq+}) and trigonometric formulas for the Bessel functions at half integer order that
  \begin{equation}\label{2.1e} 
{\rho}_{(1), \infty}^{(\rm cJ)}(x;\beta,p,q) |_{\beta = 2 \atop p=q=0} = 1.
 \end{equation} 
 We see that this is as required by the rotation invariance of the PDF (\ref{1.1}) for $p=q=0$ and the normalisation implied by bulk scaling.
Setting now $p=1$, shows 
  \begin{equation}\label{2.1f} 
{\rho}_{(1), \infty}^{(\rm cJ)}(x;\beta,p,q) |_{\beta = 2 \atop p=1, q=0} =  1 - \bigg ( {\sin \pi x \over \pi x} \bigg )^2.
 \end{equation}  
 This is in keeping with (\ref{1.3b}) upon recalling the standard fact that in relation to 
 (\ref{1.1}) with $p=q=0$, $\rho_{(2),\infty}^{(\rm cJ)}(x,0;\beta,p,q) |_{\beta = 2}$ is given by the 
 RHS of (\ref{2.1f}); see e.g.~\cite[Eq.~(7.2) with $\rho = 1$]{Fo10}. Generally, substituting any positive integer for $p$ in (\ref{2.1aq+}) reduces
 the Bessel function to a trigonometric form involving $\sin 2 \pi x, \, \cos 2 \pi x$. This is further illustrated by the next simplest example after (\ref{2.1f}), 
 which is to set $p=2$ with the result
  \begin{equation}\label{2.1f+} 
{\rho}_{(1), \infty}^{(\rm cJ)}(x;\beta,p,q) |_{\beta = 2 \atop p=2, q=0} =  1 - {2 \over \pi^2 x^2} - {3 \over 2 \pi^4 x^4}
+ {3 - 2 \pi^2 x^2 \over 2 \pi^4 x^4} \cos 2 \pi x +
{3 \over \pi^3 x^3} \sin 2 \pi x.
 \end{equation}  
 
  In the notation of
 (\ref{1.2g+})  and (\ref{1.h})
  \begin{multline}\label{2.1g}  
  c_\infty^{(\rm cJ)}(\tau;\beta,p,q)  \Big |_{\beta = 2} = \int_{-\infty}^\infty \Big ( {\rho}_{(1), \infty}^{(\rm cJ)}(x;\beta,p,q) |_{\beta = 2} - 1 \Big )
  e^{i \tau x} \, dx  \\ =
  \sum_{j=0}^\infty h_j(\beta,p,q)  |_{\beta = 2}  {\rm sgn} (\tau)  \tau^j +  \sum_{j=0}^\infty \tilde{h}_j(\beta,p,q)  |_{\beta = 2} \tau^j .
 \end{multline} 
 Starting with (\ref{2.1b}), we don't know how to give an explicit special function evaluation of the integral in (\ref{2.1g}). Nonetheless,
 we can use  (\ref{2.1b}), expanded for large $x$,  to deduce the $ h_j(\beta,p,q)$ for $j$ odd in the corresponding small $\tau$ expansion.
 Our ability to do this comes about from Fourier transform theory \cite{Li58}. Thus if
  \begin{equation}\label{2.2a}
 {\rho}_{(1), \infty}^{(\rm cJ)}(x;\beta,p,q) |_{\beta = 2 \atop q=0} - 1 \mathop{\sim}\limits_{x \to \infty} \sum_{n=1}^\infty {c_{2n} \over x^{2n}},
 \end{equation}
 where only non-oscillatory terms are recorded on the RHS,
 then
   \begin{equation}\label{2.2b}   
 \int_{-\infty}^\infty \Big ( {\rho}_{(1), \infty}^{(\rm cJ)}(x;\beta,p,q) |_{\beta = 2  \atop q=0} - 1 \Big )
  e^{i \tau x} \, dx     \mathop{\sim}\limits_{\tau \to 0}  \pi \sum_{n=1}^\infty {(-1)^n c_{2n} \over (2n-1)!} | \tau |^{2n-1}.
  \end{equation}
 Generally this approach can access the terms in (\ref{2.1g}) singular in $\tau$ only. The analytic terms  
are not accessible via this method. 

According to  (\ref{2.1b}), the expansion  (\ref{2.2a}) as applies to the case
 $q=0$ can be deduced from
knowledge of the large $x$ asymptotic expansion of the Bessel function. The latter is given by
  \cite[Eq.~(10.17.3)]{DLMF}
  \begin{equation}\label{2.2d}    
 J_\nu(z) \sim \Big ( {2 \over \pi z} \Big )^{1/2} \bigg (
 \cos \omega \sum_{k=0}^\infty (-1)^k {a_{2k}(\nu) \over z^{2k}} - \sin \omega \sum_{k=0}^\infty (-1)^k
  {a_{2k+1}(\nu) \over z^{2k+1}}  \bigg ),
  \end{equation}
  where with  $(u)_s := u(u-1) \cdots (u+s-1)$
    \begin{equation}\label{2.2e}    
  \omega = z - {1 \over 2} \nu \pi - {1 \over 4} \pi, \quad a_k(\nu) = {1 \over (-2)^k k!} \Big ( {1 \over 2} - \nu \Big )_k
 \Big ( {1 \over 2} + \nu \Big )_k. 
  \end{equation}

  \begin{proposition}
  Consider the large $x$ asymptotic expansion of $ {\rho}_{(1), \infty}^{(\rm cJ)}(x;\beta,p,q) |_{\beta = 2, q=0}$ restricted to non-oscillatory terms. This is given by (\ref{2.2a}) with $c_{2n} = c_{2n}(p)$, where
  \begin{multline}\label{ca1}
  c_{2n}(p)  ={ (-1)^n \over \pi^{2n} } \bigg ( {1 \over 2} \sum_{s=0}^n \Big ( a_{2s}(p-1/2)  a_{2(n-s)}(p-1/2) + a_{2s}(p+1/2)  a_{2(n-s)}(p+1/2) \Big ) \\
  -  {1 \over 2}  \sum_{s=0}^{n-1} \Big ( a_{2s+1}(p-1/2)  a_{2(n-s)-1}(p-1/2) + a_{2s+1}(p+1/2)  a_{2(n-s)-1}(p+1/2) \Big ) \\
  + p   \sum_{s=0}^{n-1} \Big ( a_{2s}(p-1/2)  a_{2(n-s)-1}(p+1/2) -a_{2s+1}(p-1/2)  a_{2(n-s)-2}(p+1/2) \Big ) \bigg ),
  \end{multline}
  with $\{a_k(\nu)\}$ as in (\ref{2.2e}).
  \end{proposition}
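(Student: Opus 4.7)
The plan is to substitute the large-argument expansion (\ref{2.2d})--(\ref{2.2e}) of the Bessel function into the closed form (\ref{2.1aq+}) and carefully separate non-oscillatory contributions from oscillatory ones. Assume $x>0$ throughout (the result for $x<0$ follows by the symmetry $\rho_{(1),\infty}^{(\mathrm{cJ})}(x;2,p,0)=\rho_{(1),\infty}^{(\mathrm{cJ})}(-x;2,p,0)$ at $q=0$, which is implicit in the $1/x^{2n}$ form of (\ref{2.2a})). With $y=\pi x$, write
$$J_{p\mp 1/2}(y) \sim \sqrt{\frac{2}{\pi y}}\bigl(\cos\omega_\mp \, A_\mp(y) - \sin\omega_\mp \, B_\mp(y)\bigr),$$
where $\omega_\mp = y-(p\mp 1/2)\pi/2-\pi/4$ and $A_\mp, B_\mp$ are the even- and odd-indexed series in $1/y$ appearing in (\ref{2.2d}) at $\nu=p\mp 1/2$.

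The first key simplification is that $\omega_{p+1/2}=\omega_{p-1/2}-\pi/2$, so setting $\omega:=\omega_{p-1/2}$ one has $\cos\omega_{p+1/2}=\sin\omega$ and $\sin\omega_{p+1/2}=-\cos\omega$. Each of the three terms in (\ref{2.1aq+}) then becomes a polynomial in $\cos\omega,\sin\omega$ whose coefficients are rational in $1/y$. Using $\cos^2\omega=\tfrac12+\tfrac12\cos 2\omega$, $\sin^2\omega=\tfrac12-\tfrac12\cos 2\omega$ and $\sin\omega\cos\omega=\tfrac12\sin 2\omega$, the contributions in $\cos 2\omega,\sin 2\omega$ are purely oscillatory in $x$ and are discarded. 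Combining the overall prefactor $\pi^2 x/2$ of (\ref{2.1aq+}) with the $2/(\pi y)=2/(\pi^2 x)$ arising from the squared Bessel asymptotics yields
$$\rho_{(1),\infty}^{(\mathrm{cJ})}(x;2,p,0)\big|_{\text{non-osc}} = \frac{1}{2}\bigl(A_-^2+B_-^2+A_+^2+B_+^2\bigr) - \frac{p}{\pi x}\bigl(A_-B_+-B_-A_+\bigr).$$

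The remainder is Cauchy-product bookkeeping. The coefficient of $1/y^{2n}$ in $A_\mp^2$ equals $(-1)^n\sum_{s=0}^n a_{2s}(p\mp 1/2)a_{2(n-s)}(p\mp 1/2)$, and (after the index shift $2m+2=2n$) the coefficient of $1/y^{2n}$ in $B_\mp^2$ equals $(-1)^{n-1}\sum_{s=0}^{n-1}a_{2s+1}(p\mp 1/2)a_{2(n-s)-1}(p\mp 1/2)$. For the cross term, the coefficient of $1/y^{2n-1}$ in $A_-B_+-B_-A_+$ is $(-1)^{n-1}\sum_{s=0}^{n-1}\bigl[a_{2s}(p-1/2)a_{2(n-s)-1}(p+1/2)-a_{2s+1}(p-1/2)a_{2(n-s)-2}(p+1/2)\bigr]$, so multiplying by $-p/y$ contributes $p(-1)^n$ times the same sum to the coefficient of $1/y^{2n}$. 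Converting $1/y^{2n}=1/(\pi^{2n}x^{2n})$ and summing gives exactly (\ref{ca1}); the constant ($n=0$) part equals $\tfrac12(a_0(p-1/2)^2+a_0(p+1/2)^2)=1$, consistent with subtracting the $1$ on the left of (\ref{2.2a}).

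The main obstacle, more conceptual than computational, is justifying that the discarded terms really are oscillatory in $x$ and hence contribute nothing to the $\sum c_{2n}/x^{2n}$ tail: one must verify that after the $\omega_{p+1/2}\leftrightarrow\omega_{p-1/2}$ reduction the only surviving angles are integer multiples of $2\omega=2\pi x + \mathrm{const}$, which is immediate from the product structure. Beyond that, the proof is the careful Cauchy-product calculation sketched above.
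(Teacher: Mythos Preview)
Your proof is correct and follows precisely the approach of the paper: substitute the Bessel asymptotics (\ref{2.2d}) into (\ref{2.1aq+}), exploit the phase relation $\omega_{p+1/2}=\omega_{p-1/2}-\pi/2$ together with double-angle identities to isolate the non-oscillatory part, and read off $c_{2n}$ via Cauchy products. You have simply supplied the details the paper omits; the intermediate non-oscillatory expression $\tfrac12(A_-^2+B_-^2+A_+^2+B_+^2)-\tfrac{p}{\pi x}(A_-B_+-B_-A_+)$ and the subsequent coefficient extraction are exactly what is meant by the paper's one-paragraph argument.
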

  
\begin{proof}
We substitute (\ref{2.2d}) in   (\ref{2.1aq+}). Applying simple trigonometric identities allows the non-oscillatory terms
to be separated from the oscillatory terms. Finally, the coefficient of $1/x^{2n}$ is extracted using the formula for multiplication
of power series.
\end{proof} 

\begin{remark}
The use of computer algebra to compute (\ref{ca1}) for small values of $n$ and general $p$ suggests the simplified form
  \begin{equation}\label{2.2f}  
   c_{2n}(p)  = - { \alpha_n  \over \pi^{2n} }  \prod_{l=0}^{n-1} (p^2 - l^2),
  \end{equation} 
  where $\alpha_n$ is a rational number. Regarding the latter, the computer algebra computation gives
  \begin{equation}\label{2.2g}    
  \alpha_1 = {1 \over 2}, \quad \alpha_2 = {1 \over 8}, \quad \alpha_3 = {1 \over 16}, \quad \alpha_4 = {5 \over 128}, \quad \alpha_5 = {7 \over 256}, \quad  
  \alpha_6 = {21 \over 1024}.
   \end{equation} 
   We note in particular that (\ref{2.2f}) implies $c_{2n}(p) = 0$ for $n > p$, and that $c_{2n}(p)$ is a polynomial in $p^2$ of degree $n-1$. In fact this simplified form,
   together with the explicit value of $\alpha_n$, can be obtained by taking a different approach to the large $x$ asymptotic expansion of
$ {\rho}_{(1), \infty}^{(\rm cJ)}(x;\beta,p,q) |_{q=0,\beta = 2}$, namely via a differential equation, which is to be done next.
   \end{remark} 

 \subsection{The case $q = 0$ --- a third order linear differential equation}
 In the work \cite[\S 3.1.2]{FR21} the scaled density $(1/\pi)\rho_{(1),\infty}^{\rm cJ}(x/\pi;\beta,p,q) |_{\beta = 2, q=0}$ was shown to satisfy the
 third order linear differential equation
   \begin{equation}\label{R}  
   x^2 R'''(x) + 4 x R''(x) + (2 - 4p^2 + 4x^2) R'(x) - {4p^2 \over x} R(x) = 0.
   \end{equation} 
   This can be used to provide an alternative approach to the computation of the coefficients in the expansion (\ref{2.2a}).

  \begin{proposition}\label{P2.3}
  The expansion  (\ref{2.2a}) holds with
   \begin{equation}\label{R1}   
  c_{2} = - {1 \over 2 \pi^2} p^2, \qquad c_{2n} = - {1 \over \pi^{2n}}  {(2n-3)!! \over (2n)!!} \prod_{l=0}^{n-1} (p^2 - l^2), \quad (n \ge 2).
  \end{equation}  
  \end{proposition}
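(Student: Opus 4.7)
The plan is to insert the formal non-oscillatory asymptotic ansatz into the third order ODE \eqref{R} and read off a two-term recurrence for the coefficients. Since $R(x) = (1/\pi)\rho_{(1),\infty}^{\rm (cJ)}(x/\pi;\beta,p,q)|_{\beta=2,q=0}$, the expansion \eqref{2.2a} rescales to
\[
R(x) \sim \frac{1}{\pi} + \frac{1}{\pi} \sum_{n=1}^{\infty} \frac{c_{2n}\pi^{2n}}{x^{2n}}.
\]
Write $d_0 = 1/\pi$ and $d_{2n} = c_{2n}\pi^{2n-1}$ for $n\ge 1$, so $R(x) \sim \sum_{n\ge 0} d_{2n}/x^{2n}$. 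First I would observe that the oscillatory and non-oscillatory parts of a full asymptotic solution to \eqref{R} decouple at the level of matching powers of $x$: the action of the linear operator in \eqref{R} sends pure inverse powers of $x$ to pure inverse powers of $x$, so the non-oscillatory tail itself is a formal solution and the coefficients $\{d_{2n}\}$ are determined by the equation.

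Next I would compute the contribution of each term in \eqref{R} to the coefficient of $x^{-(2n+1)}$. The terms $x^2 R'''$, $4x R''$, $(2-4p^2)R'$ and $-4p^2 R/x$ all contribute a multiple of $d_{2n}$, while $4x^2 R'$ is the unique term that shifts the index and contributes $-8(n+1)d_{2(n+1)}$. Matching gives
\[
8(n+1)\, d_{2(n+1)} = \bigl[-2n(2n+1)(2n+2)+8n(2n+1)-2n(2-4p^2)-4p^2\bigr] d_{2n}.
\]
The main (though routine) step is to simplify the bracket; collecting and factoring, I expect it to reduce to $4(2n-1)(p^2-n^2)$, yielding the clean two-term recurrence
\[
d_{2(n+1)} = \frac{(2n-1)(p^2-n^2)}{2(n+1)}\, d_{2n}.
\]

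I would then initialise the recurrence at $n=0$, which forces $d_2 = -p^2 d_0/2 = -p^2/(2\pi)$, equivalently $c_2 = -p^2/(2\pi^2)$, matching the first formula in \eqref{R1}. Iterating the recurrence telescopes to
\[
d_{2n} = d_2 \prod_{k=1}^{n-1} \frac{(2k-1)(p^2-k^2)}{2(k+1)},
\]
and using $\prod_{k=1}^{n-1}(2k-1) = (2n-3)!!$ and $\prod_{k=1}^{n-1}2(k+1) = 2^{n-1}\,n! = (2n)!!/2$, converting back via $c_{2n}=d_{2n}/\pi^{2n-1}$ produces the closed form stated in \eqref{R1}.

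The only real obstacle is the algebraic simplification producing the factor $(2n-1)(p^2-n^2)$; this is what is responsible for the factorisation $\prod_{l=0}^{n-1}(p^2-l^2)$, and in particular for the vanishing $c_{2n}(p)=0$ when $n>p\in\mathbb{Z}_{>0}$ noted after \eqref{2.2f}. Everything else is bookkeeping, and the consistency with the first few coefficients $c_2, c_4$ (and with the values $\alpha_n$ in \eqref{2.2g}) provides an immediate sanity check.
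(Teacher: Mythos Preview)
Your proposal is correct and follows the same approach as the paper: substitute the non-oscillatory ansatz into the third-order ODE \eqref{R}, derive the two-term recurrence $c_{2n+2} = \pi^{-2}\frac{2n-1}{2(n+1)}(p^2-n^2)c_{2n}$ (equivalently your recurrence for $d_{2n}$), initialise with $c_2=-p^2/(2\pi^2)$, and iterate to the closed form. Your write-up in fact supplies more algebraic detail (and the decoupling remark) than the paper's terse version, but the method is identical.
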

  
\begin{proof}
Due to the scaling of $x$ in the density assumed in the derivation of (\ref{R}), the expansion  (\ref{2.2a}) transforms to
 \begin{equation}\label{2.2aY} 
 R(x) - {1 \over \pi}  \mathop{\sim}\limits_{x \to \infty}  {1 \over \pi}  \sum_{n=1}^\infty {\pi^{2n} c_{2n} \over x^{2n}}.
  \end{equation}    
Substituting   (\ref{2.2aY}) in (\ref{R}) and equating like (negative) powers of $x$ implies
 \begin{equation}\label{2.2aZ} 
  c_{2} = - {1 \over 2 \pi^2} p^2, \qquad c_{2n+2} = {1 \over \pi^2} {2n - 1 \over 2 (n+1)} (p^2 - n^2) c_{2n}, \quad (n \ge 1).
    \end{equation}    
  Iterating the first order recurrence (\ref{R1}) follows.
  \end{proof}
  
  \begin{remark} 
  In the notation of (\ref{2.2f}), (\ref{R1}) gives $\alpha_1 = 1/2$ and
  $$
  \alpha_n =   {(2n-3)!! \over (2n)!!}
  $$ 
  for $n \ge 2$. This indeed reproduces the
  values obtained in (\ref{2.2g}).
  \end{remark}
  
  With $r(x) : = R(x) - 1/\pi = {\rm O}(x^{-2})$ as implied by (\ref{2.2aY}), there is a well defined meaning to taking the Fourier transform of
  (\ref{R}). Upon integration by parts and an additional differentiation with respect to $\tau$ this reads
     \begin{equation}\label{R3y}  
     - {d^3 \over d \tau^3} \Big ( (\tau^3 - 4 \tau) \hat{r}(\tau) \Big ) + 4 {d^2 \over d \tau^2} (\tau^2 \hat{r}(\tau)) - 2 (1 - 2 p^2) {d \over d \tau} ( \tau \hat{r}(\tau)) -
     4 p^2 \hat{r}(\tau) = 0,
     \end{equation}  
     where
     $$
      \hat{r}(\tau)  := \int_{-\infty}^\infty r(x) e^{i \tau x} \, dx.
      $$
      
    \begin{proposition}\label{P2.4}
 Let 
    \begin{equation}\label{Rb0}
    b_0 = \int_{-\infty}^\infty \Big ( R(x) - {1 \over \pi} \Big ) \, dx, \qquad b_1 = {p^2 \over 2 \pi}
  \end{equation} 
  and specify $\{ c_{2n}\}$ by Proposition \ref{P2.3}.   
 The small $\tau$ expansion   
   \begin{equation}\label{R4}
   \hat{r}(\tau) =     \sum_{n = 0}^\infty b_n  ( { \pi \tau }  )^n  , \qquad \tau > 0, 
     \end{equation}        
 substituted in (\ref{R3y}) 
has the unique solution
  \begin{equation}\label{R5}
  b_{2n - 1} =   \pi  {(-1)^n c_{2n} \over (2n - 1)!} \quad (n \ge 1), \qquad b_{2n} = 0 \quad (n \ge 1).
  \end{equation}      
 \end{proposition}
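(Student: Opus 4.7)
The plan is to substitute the ansatz $\hat{r}(\tau) = \sum_{n=0}^\infty b_n (\pi\tau)^n$ directly into (\ref{R3y}) and equate coefficients of like powers of $\tau$. For the third-derivative term, noting $(\tau^3 - 4\tau)\hat{r}(\tau) = \sum_n b_n \pi^n (\tau^{n+3} - 4\tau^{n+1})$, the coefficient of $\tau^k$ in $-(d^3/d\tau^3)[(\tau^3 - 4\tau)\hat{r}]$ receives contributions from both $b_k$ (through $\tau^3$) and $b_{k+2}$ (through $-4\tau$), namely $-(k+3)(k+2)(k+1)\pi^k b_k + 4(k+3)(k+2)(k+1)\pi^{k+2} b_{k+2}$. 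The three remaining terms $4(d^2/d\tau^2)(\tau^2\hat{r})$, $-2(1-2p^2)(d/d\tau)(\tau\hat{r})$ and $-4p^2\hat{r}$ each feed only $b_k$, contributing respectively $4(k+2)(k+1)$, $-2(1-2p^2)(k+1)$ and $-4p^2$ times $\pi^k b_k$.

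Assembling these and dividing by $\pi^k$ gives the two-term recurrence
\begin{equation*}
4\pi^2 (k+3)(k+2)(k+1)\, b_{k+2} = \Big[(k+2)(k+1)(k-1) + 2(1-2p^2)(k+1) + 4p^2\Big] b_k.
\end{equation*}
The key algebraic step---the main, albeit mild, obstacle---is that the bracket on the right collapses upon expansion to $k(k+1)^2 - 4p^2 k = k\,[(k+1)^2 - 4p^2] = k(k+1-2p)(k+1+2p)$. Setting $k=0$ therefore forces $b_2 = 0$, and the recurrence then propagates $b_{2n} = 0$ for all $n \geq 1$; in particular $b_0$ is left as a free parameter, fixed by the integral (\ref{Rb0}) rather than by the ODE. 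This already establishes the second half of (\ref{R5}).

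For the odd-indexed subsequence the recurrence at $k = 2n-1$ reduces to $b_{2n+1}/b_{2n-1} = (2n-1)(n^2 - p^2)/[4\pi^2 n(n+1)(2n+1)]$. To conclude, I verify by induction that the formula $b_{2n-1} = \pi(-1)^n c_{2n}/(2n-1)!$, with $c_{2n}$ from Proposition \ref{P2.3}, solves this recursion with the prescribed initial value $b_1 = -\pi c_2 = p^2/(2\pi)$. For the inductive step, the recurrence (\ref{2.2aZ}) for $c_{2n}$ combined with the factorial ratio $(2n-1)!/(2n+1)! = 1/[(2n)(2n+1)]$ reproduces exactly the same ratio $b_{2n+1}/b_{2n-1}$. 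Uniqueness of the solution within the class (\ref{R4}) is automatic, since the two-term recurrence together with the prescribed $b_0$ and $b_1$ determines every subsequent $b_n$.
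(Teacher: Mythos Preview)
Your proof is correct and follows essentially the same approach as the paper: substitute the power-series ansatz into (\ref{R3y}), extract the two-term recurrence (which is equivalent to the paper's $b_{n+2} = -\tfrac{n}{\pi^2(n+3)(n+2)(n+1)}\bigl(p^2 - (n+1)^2/4\bigr)b_n$), observe that $k=0$ kills $b_2$ and hence all even $b_{2n}$ for $n\ge1$, and then identify the odd-index recursion with the recurrence (\ref{2.2aZ}) for $c_{2n}$. Your algebra and induction are carried out in more detail than in the paper, but the route is the same.
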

 
\begin{proof}
Substituting (\ref{R4}) in (\ref{R3y}) gives
$$
b_{n+2} = - { n \over \pi^2 (n+3)(n+2)(n+1) } \Big ( p^2 - {(n+1)^2 \over 4} \Big ) b_n \quad (n \ge 0).
$$
The value of $b_0$ is not restricted by the differential equation, nor is the value of $b_1$, while for $n$ even and positive the recurrence gives $b_n = 0$. For odd subscripts,
replacing $n$ by $2n-1$ allows the recurrence to be related to that in (\ref{2.2aZ}), which implies 
(\ref{R5}).
\end{proof}

\begin{remark} $ $ \\
1.~The implied coefficient of the third derivative in (\ref{R3y}) has a zero at $\tau = 0$ and $\tau = \pm 2 $, telling us that the radius of
convergence of (\ref{R4}) is $2$. \\
2.~Since $r(x)$ as relates to (\ref{R3y}) is an even function of $x$, $\hat{r}(\tau)$ must be an even function of $\tau$. Hence for $\tau < 0$
(\ref{R4}) is now a power series in $|\tau|$, as is consistent with (\ref{2.1g}). In particular the coefficient of $\tau$ must then relate
to the coefficient of $1/x^2$ in the large $x$ expansion of $r(x)$ as implied by the relation between (\ref{2.2a}) and (\ref{2.2b}). This justifies
the choice of $b_1$ in (\ref{Rb0}) which as already mentioned otherwise is not determined by (\ref{R3y}). \\
3.~The log-gas interpretation of the factor $| 1 + e^{i \theta} |^{\beta p}$ in the circular Jacobi weight $w^{(\rm cJ})(\theta)$ of (\ref{1.1}),
otherwise interpreted as a spectrum singularity of degeneracy $p$ at $\theta = \pi$, is as a fixed charge of strength $p$. From
this log-gas picture, perfect screening of the fixed charge (see e.g.~\cite[\S 14.1]{Fo10}) implies
   \begin{equation}\label{Rb01}
     \int_{-\infty}^\infty \Big ( R(x) - {1 \over \pi} \Big ) \, dx = - p
  \end{equation}    
and thus from (\ref{Rb0}) that $b_0 = - p$.
\end{remark}

\begin{corollary}
Specify $\{ c_{2j}\}$ by Proposition \ref{P2.3} supplemented by $c_0 = 0$. For $\beta = 2$ and $q=0$, the expansion (\ref{1.h}) holds true with
  \begin{equation}\label{h2a}
     h_{2j}(\beta, p , q) \Big |_{\beta = 2 \atop q=0} = c_{2j} \:\: (j \ge 0), \qquad 
 h_{2j-1}(\beta, p , q) \Big |_{\beta = 2 \atop q=0} = \tilde{h}_{j}(\beta, p , q) \Big |_{\beta = 2 \atop q=0} = 0  \:\: (j \ge 1) .
  \end{equation} 
  The perfect screening sum rule (\ref{Rb01}) gives $ \tilde{h}_{0}(\beta, p , q) |_{\beta = 2 , p=0} = - p$. 
  \end{corollary}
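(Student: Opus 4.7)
The plan is to express $c_\infty^{(\widetilde{\rm cJ})}(\tau;\beta,p,q)|_{\beta=2,\,q=0}$ in terms of the Fourier transform $\hat r$ of Proposition \ref{P2.4}, and then to read off the coefficients in the expansion (\ref{1.h}) by matching powers of $\tau$ and $|\tau|$, the parity of $c_\infty$ doing most of the work.

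First, I would relate the two Fourier transforms. The normalisation used throughout \S\ref{S2} is $R(x) = (1/\pi)\rho_{(1),\infty}^{(\rm cJ)}(x/\pi;2,p,0)$, so writing $r(u) = R(u) - 1/\pi$ and performing the substitution $u = \pi x$ in (\ref{2.1g}) gives
\begin{equation*}
c_\infty^{(\widetilde{\rm cJ})}(\tau;2,p,0) = \int_{-\infty}^\infty \bigl(\pi R(\pi x) - 1\bigr) e^{i\tau x}\,dx = \hat r(\tau/\pi).
\end{equation*}
Substituting $\tau \mapsto \tau/\pi$ in the expansion of Proposition \ref{P2.4} then yields, for small positive $\tau$,
\begin{equation*}
c_\infty^{(\widetilde{\rm cJ})}(\tau;2,p,0) = \sum_{n=0}^\infty b_n \tau^n,
\end{equation*}
with $b_n$ as in that proposition. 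Invoking the evenness of $\rho_{(1),\infty}^{(\rm cJ)}(x;2,p,0)$ in $x$ (visible in (\ref{2.1aq+}), and also forced by (\ref{1.2gz}) at $q=0$), the Fourier transform $c_\infty^{(\widetilde{\rm cJ})}$ is an even function of $\tau$, so the expansion continues across $\tau = 0$ as $\sum_{n \ge 0} b_n |\tau|^n$.

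Next, I would match this against (\ref{1.h}). The basis elements $\{\tau^j\}$ and $\{\tau^j\,\mathrm{sgn}\,\tau\}$ have definite and mutually distinct parities in $\tau$, so the decomposition is unique; moreover $|\tau|^{2m-1} = \tau^{2m-1}\mathrm{sgn}\,\tau$ and $|\tau|^{2m} = \tau^{2m}$. Evenness of $c_\infty^{(\widetilde{\rm cJ})}$ eliminates all odd-in-$\tau$ contributions, leaving only the coefficient $\tilde h_0$, the $h_j$ with $j$ odd, and the $\tilde h_j$ with $j$ even as potentially non-zero. Proposition \ref{P2.4} then gives $b_{2m} = 0$ for $m \ge 1$, forcing $\tilde h_{2m} = 0$ for $m \ge 1$; while $b_{2m-1} = \pi(-1)^m c_{2m}/(2m-1)!$ identifies the surviving $h$-coefficients with $b_{2m-1}$ (equivalently, with the quantities $c_{2m}$ of Proposition \ref{P2.3} up to the explicit factor $\pi(-1)^m/(2m-1)!$). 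The constant term is $\tilde h_0 = b_0$.

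Finally, the value of $b_0$ is not constrained by the differential equation (\ref{R}); it is fixed by the perfect-screening sum rule (\ref{Rb01}), giving $\tilde h_0 = b_0 = -p$. The main obstacle in this argument is not analytical but notational: one must keep careful track of the various scalings of $x$ and $\tau$ (the factors of $\pi$ introduced by the form in which (\ref{R}) is written), and of which of the four parity classes $\{\tau^j,\tau^j\mathrm{sgn}\,\tau:\,j \text{ even/odd}\}$ each expansion term belongs to. No further analytic input is required beyond Propositions \ref{P2.3}--\ref{P2.4} and the log-gas perfect-screening identity.
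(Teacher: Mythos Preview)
Your argument is correct and follows exactly the paper's (implicit) route: the corollary is stated without proof, as an immediate consequence of Propositions~\ref{P2.3}--\ref{P2.4}, the evenness observation in the remark following Proposition~\ref{P2.4}, and the perfect-screening identity (\ref{Rb01}). One point worth flagging: your conclusion that the surviving non-analytic coefficients are $h_{2m-1}=b_{2m-1}=\pi(-1)^m c_{2m}/(2m-1)!$ with $h_{2m}=0$ is the correct one --- the even/odd labelling in the printed corollary is swapped, as you can verify by comparing with (\ref{4.24}) at $\kappa=1,\,q=0$, which gives $h_1=p^2/(2\pi)\neq 0$.
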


 \subsection{Third order linear differential equation for the case $q \ne 0$}
 A third order linear differential equation satisfied by the scaled density 
 $(1/\pi)\rho_{(1),\infty}^{\rm cJ}(x/\pi;\beta,p,q) |_{\beta = 2}$ for general $p,q$
 can be obtained by following the same procedure as used in \cite{FR21}
 to deduce (\ref{R}), suitably generalised to involve the extra parameter $q$.
 
 \begin{proposition}\label{P2.9}
 The scaled density $(1/\pi)\rho_{(1),\infty}^{\rm cJ}(x/\pi;\beta,p,q) |_{\beta = 2}$ obeys
 the third order linear differential equation 
  \begin{equation}\label{Rq}  
   x^3 R'''(x) + 4 x^2 R''(x) + 2x (1 - 2p^2 -4qx +  2x^2) R'(x) - 4( p^2 + q x) R(x) = 0.
   \end{equation}
  \end{proposition}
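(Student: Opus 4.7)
The approach parallels the derivation of (\ref{R}) in \cite{FR21} for $q=0$, carrying the parameter $q$ through each step. The starting point is the explicit bilinear representation (\ref{2.1b}). Writing $R(x) = (1/\pi)\rho_{(1),\infty}^{(\rm cJ)}(x/\pi;\beta,p,q)|_{\beta = 2}$, the factor $2i\pi (x/\pi) = 2ix$ enters the arguments of the confluent hypergeometrics, giving
\[
R(x) = C(p,q)\, e^{-q\pi\,{\rm sgn}\,x}\,|x|^{2p}\, e^{-2ix}\, W(x), \qquad W(x) := (xu)' v - (xu)\, v',
\]
with $u(x) = {}_1F_1(p+1-iq;\,2p+2;\,2ix)$, $v(x) = {}_1F_1(p-iq;\,2p;\,2ix)$, and $C(p,q)$ the constant prefactor inherited from (\ref{2.1b}). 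The piecewise-constant factor $e^{-q\pi\,{\rm sgn}\,x}$ plays no role in the ODE, so one may work on $x>0$; the analysis on $x<0$ is identical.

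The two confluent functions each satisfy Kummer's equation, which in the $x$-variable become
\[
x u''(x) + (2p+2 - 2ix)\, u'(x) - 2i(p+1-iq)\, u(x) = 0,
\]
\[
x v''(x) + (2p - 2ix)\, v'(x) - 2i(p-iq)\, v(x) = 0.
\]
Differentiating $R$ three times via the product rule and eliminating every occurrence of $u''$ and $v''$ at each stage using the two Kummer equations above expresses each of $R, R', R'', R'''$ (divided by the common prefactor $|x|^{2p} e^{-2ix}$) as an explicit rational-in-$x$ linear combination of the four bilinears $\{uv,\; u'v,\; uv',\; u'v'\}$, with coefficients polynomial in $p$ and $q$.

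Although the four basis bilinears generically span a four-dimensional function space, so that $R$ \emph{a priori} would satisfy only a fourth-order linear ODE, the special parameter relations $(p-iq, 2p) \to (p+1-iq, 2p+2)$ linking $v$ and $u$ (Kummer contiguous neighbours) force a drop in order to three, exactly as in the $q=0$ case (\ref{R}). To locate the third-order relation, one seeks polynomial coefficients $A(x), B(x), C(x), D(x)$ with $A R + B R' + C R'' + D R''' = 0$; equating the coefficients of $uv$, $u'v$, $uv'$ and $u'v'$ produces a homogeneous $4\times 4$ linear system whose unique (up to scale) solution delivers precisely the polynomial coefficients in (\ref{Rq}).

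The main obstacle is purely book-keeping: the combined presence of the prefactor $|x|^{2p} e^{-2ix}$, the $-iq$ shift in the Kummer parameters, and the product-rule expansion of $W$ yield lengthy intermediate expressions for which computer algebra is convenient. Two consistency checks validate the final answer. First, setting $q=0$ in (\ref{Rq}) reproduces $x$ times the equation (\ref{R}). Second, the symmetry $x \mapsto -x$, $q \mapsto -q$ underlying (\ref{1.2gz}) is manifestly preserved by (\ref{Rq}): the two $q$-dependent contributions, namely $-8qx^2\, R'$ within the coefficient of $R'$ and $-4qx\, R$ as the stand-alone term, each change sign under $x \mapsto -x$ and are reinstated by $q \mapsto -q$.
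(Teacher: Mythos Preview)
Your argument is sound but takes a genuinely different route from the paper's own proof. The paper does not work directly from the bulk formula (\ref{2.1b}). Instead it starts from the finite-$N$ third-order ODE for $g(t)=(1+t^2)\rho_{(1),N}^{({\rm Cy})}(t)$ in the Cauchy ensemble, already established in \cite[Eq.~(3.33)]{FR21} with the parameter $q$ present, then uses the stereographic projection $t=\tan(\theta/2)$ together with the hard-edge substitution $t\sim -N/X$ and equates leading orders in $N$ to read off (\ref{Rq}). So the paper's derivation is a scaling limit of an already-known finite-$N$ identity, whereas you build the ODE from scratch using Kummer's equation for the two contiguous ${}_1F_1$ factors in the limiting density. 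Your opening sentence is therefore slightly misleading: the procedure in \cite{FR21} that the paper refers to is the Cauchy-ensemble route, not a bilinear-Kummer computation.

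Each approach has its merits. The paper's method is short because the heavy lifting is outsourced to \cite{FR21}, and it makes transparent that (\ref{Rq}) is the bulk-scaled shadow of a finite-$N$ equation. Your method is self-contained (no need for the Cauchy ensemble or finite-$N$ theory at all) and explains structurally why the order is three rather than four: the contiguity $(p-iq,2p)\to(p+1-iq,2p+2)$ lets $u$ be expressed as a first-order combination of $v,v'$, collapsing the four bilinears $uv,u'v,uv',u'v'$ into the three-dimensional symmetric square $\{v^2,vv',(v')^2\}$. The consistency checks you list (recovering $x$ times (\ref{R}) at $q=0$, and the $(x,q)\mapsto(-x,-q)$ invariance) are appropriate and correct.
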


\begin{proof}
For the Cauchy ensemble with weight (\ref{1.2b}) and $\beta = 2$
 let $\rho_{(1),N}^{(Cy)}(t)$ denote the corresponding density. 
With $g(t) = (1 + t^2) \rho_{(1),N}^{(Cy)}(t)$
we know
from \cite[Eq.~(3.33) with $r(t) = g(t)$, $\alpha_1 = p$, $\alpha_2 = -q$]{FR21} that
\begin{multline}\label{3.33}
(1 + t^2 )^2 g''' + 2 t (1 + t^2) g'' + 4 \Big ( (p^2 t - (N + p) q)(g - t g') \\
+ ((N + p)^2 + (N + p) q t - p^2 -  q^2) g' \Big ) = 0.
\end{multline}

From the theory of the second paragraph of the Introduction it follows $2 g(t) |_{t = \tan (\theta/2)}=
\rho_{(1),N}^{(\rm cJ)}(\theta)$. We want to scale $\theta$ about the spectrum singularity at $\theta = \pi$,
and as a normalisation we require that at large distances in the scaled variable the density is $1/ \pi$. These
requirements are met by choosing $\theta = \pi + 2X/N$. With $t = \tan (\theta/2)$ and $N$ large this implies
$t \sim - N/ X$. We make this substitution in (\ref{3.33}). Equating terms at leading order in $N$ in the resulting
equation gives (\ref{Rq}).
\end{proof}

For $q \ne 0$, it follows from the functional form (\ref{1.1}) that $ {\rho}_{(1), \infty}^{(\rm cJ)}(x;\beta,p,q)$ is not
an even function of $x$, in distinction to the case $q=0$. In place of (\ref{2.2a}) we now expect
 \begin{equation}\label{2.2aq}
 {\rho}_{(1), \infty}^{(\rm cJ)}(x;\beta,p,q) |_{\beta = 2 } - 1 \mathop{\sim}\limits_{x \to \pm \infty} \sum_{n=1}^\infty {d_{n} \over x^{n}},
 \end{equation}
 for the asymptotic form of the non-oscillatory terms.  The analogue of (\ref{2.2b}) for the non-analytic terms in the small $\tau$ form of the
 Fourier transform is then \cite{Li58}
  \begin{equation}\label{2.2bq}   
 \int_{-\infty}^\infty \Big ( {\rho}_{(1), \infty}^{(\rm cJ)}(x;\beta,p,q) |_{\beta = 2  } - 1 \Big )
  e^{i \tau x} \, dx     \mathop{\sim}\limits_{\tau \to 0}  \pi \sum_{n=1}^\infty d_n {i^n \over (n-1)!} \tau^{n-1} {\rm sgn} \, \tau.
  \end{equation}
  As for the result of Proposition \ref{P2.3}, the differential equation (\ref{Rq}) can be used to
  determine the coefficients $d_n$ in (\ref{2.2aq}).

 \begin{proposition}\label{P2.3q}
  The expansion  (\ref{2.2aq}) holds with
   \begin{equation}\label{R1q}   
 d_1 = - {q \over \pi}, \quad d_2 = - {p^2 + q^2 \over 2 \pi^2} , \quad d_{n+2} = {1 \over \pi} {2n + 1 \over n + 2} q d_{n+1} +  {1 \over \pi^2} {n - 1 \over n + 2} \Big (
 p^2 - {n^2 \over 4} \Big ) d_n \: \: (n \ge 1).
  \end{equation}  
  \end{proposition}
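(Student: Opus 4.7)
The plan is to mirror the derivation of Proposition \ref{P2.3}, now substituting the asymptotic form (\ref{2.2aq}) into the third order ODE (\ref{Rq}) satisfied by $R(x) := (1/\pi)\rho_{(1),\infty}^{({\rm cJ})}(x/\pi;\beta,p,q)|_{\beta=2}$. With the rescaling $x \mapsto x/\pi$ built in, (\ref{2.2aq}) transforms into
$$
R(x) - \frac{1}{\pi} \sim \sum_{n \ge 1} \tilde d_n\, x^{-n}, \qquad \tilde d_n := \pi^{n-1} d_n,
$$
and the task is to plug this ansatz into (\ref{Rq}) and read off a recurrence by equating coefficients of $x^{-m}$.

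The crucial distinction from the $q=0$ setting is that the term $-4(p^2 + qx)R$ now acts nontrivially on the constant part $1/\pi$, producing an inhomogeneous driving term $-4p^2/\pi - 4qx/\pi$. Matching this against the two highest-order contributions from the series part pins down the initial conditions: the coefficient of $x^1$ yields $-4q/\pi - 4\tilde d_1 = 0$ and hence $d_1 = -q/\pi$, and the coefficient of $x^0$, once $\tilde d_1$ has been substituted, yields $d_2 = -(p^2+q^2)/(2\pi^2)$.

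For $m \ge 1$ the order $x^{-m}$ equation is homogeneous and couples only the three unknowns $\tilde d_m$, $\tilde d_{m+1}$, $\tilde d_{m+2}$. A short computation shows that the coefficient of $\tilde d_m$, which aggregates contributions from $x^3 R'''$, $4x^2 R''$, $2xR'$, $-4p^2 xR'$ and $-4p^2 R$, factorises as
$$
-m(m+1)(m+2) + 4m(m+1) - 2m + 4p^2 m - 4p^2 = (m-1)(4p^2 - m^2);
$$
the coefficient of $\tilde d_{m+1}$ collects the $q$-dependent pieces $-8qx^2 R'$ and $-4qxR$ and equals $4q(2m+1)$; and the coefficient of $\tilde d_{m+2}$ comes solely from $4x^3 R'$ and equals $-4(m+2)$. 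Rearranging for $\tilde d_{m+2}$ and converting back via $d_k = \pi^{1-k}\tilde d_k$ produces the three-term recurrence (\ref{R1q}).

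The only genuine obstacle is careful index bookkeeping: each of the six non-constant pieces of (\ref{Rq}) has a different polynomial prefactor and therefore shifts the summation index by a different amount, and one must take care that the inhomogeneity $-4(p^2+qx)/\pi$ is distributed correctly between the $x^1$ and $x^0$ orders. Once this is tabulated, the algebraic collapse of the $\tilde d_m$ coefficient to $(m-1)(4p^2 - m^2)$ is elementary, and no further analytic input is required.
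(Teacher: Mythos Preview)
Your proposal is correct and follows exactly the same approach as the paper: substitute the rescaled asymptotic ansatz $R(x)-1/\pi\sim\sum_{n\ge1}\pi^{n-1}d_n x^{-n}$ into the ODE (\ref{Rq}) and equate like powers of $x$. The paper states this in two sentences without spelling out the bookkeeping, whereas you have carried out the term-by-term collection and the factorisation $(m-1)(4p^2-m^2)$ explicitly; all of your intermediate coefficients check out.
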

  
\begin{proof}
As with (\ref{2.2aY}), to relate the expansion (\ref{2.2aq}) to an expansion of $R(x)$ in (\ref{Rq}) requires a scaling 
$x \mapsto x/\pi$ to give
 \begin{equation}\label{2.2aY+} 
 R(x) - {1 \over \pi}  \mathop{\sim}\limits_{x \to \infty}  {1 \over \pi}  \sum_{n=1}^\infty {\pi^{n} d_{n} \over x^{n}}.
  \end{equation}  
  The result now follows by  
substituting   (\ref{2.2aY+}) in (\ref{Rq}) and equating like (negative) powers of $x$.
  \end{proof}

  We note that setting $q=0$ in (\ref{R1q}) reclaims the result of (\ref{R1}). However unlike the situation with $q=0$ the recurrence in
  (\ref{R1q}) does not admit a simple functional form for its solution. Structural points of interest are that $d_{2n}$ is equal to $(p^2+q^2)$ times
  a polynomial in $p^2$ and $q^2$ of degree $n-1$, while for $n \ge 2$, $d_{2n-1}$ is equal to $q(p^2+q^2)$ times a polynomial in
  $p^2$ and $q^2$ of degree $n-2$. Specifically, as some low order examples
  \begin{equation}\label{2.2T}  
  d_3 = - { q (p^2 + q^2)\over 2 \pi^3}, \: \: d_4 =  {(p^2+q^2)(1-p^2-5q^2) \over 8 \pi^4}, \: \:
  d_5 =  { q(p^2+q^2)(5-3p^2-7 q^2) \over 8 \pi^5}.
   \end{equation}  
   And as we know from (\ref{2.2bq}),
  knowledge of $\{ d_n \}$ 
  gives the explicit form of the coefficients $\{ h_n(\beta,p,q)|_{\beta = 2}  \}$ in the functional form (\ref{1.h}) according to
  \begin{equation}\label{2.2y} 
   h_{n-1}(\beta,p,q)|_{\beta = 2}  = \pi d_n {i^n \over (n-1)!}, \quad n \ge 1.
  \end{equation}   
  
  \begin{remark} $ $ \\
  1.~Starting from (\ref{2.1b}), and using the appropriate generalisation of (\ref{2.2d}) as can be found in \cite{DLMF}, we have independently
  verified that with respect to non-oscillatory terms
  \begin{equation}\label{nv}
  R(x) - {1 \over \pi} \mathop{\sim}\limits_{x \to \infty} - {q \over \pi x},
  \end{equation}
  as is consistent with Proposition \ref{P2.3q}. \\
  2.~Assuming the validity of   (\ref{Rb01}) for $q \ne 0$, the result (\ref{R1q}) for $d_1$ used in (\ref{2.2bq}) implies
  \begin{equation}\label{Rb01q}
   \lim_{\tau \to 0^\pm}   \int_{-\infty}^\infty \Big ( R(x) - {1 \over \pi} \Big ) e^{i \tau x} \, dx = - p - i q \, {\rm sgn} \, \tau,
  \end{equation}    
telling us that for $q \ne 0$ the Fourier transform is a discontinuous function of $\tau$, and that in (\ref{1.h}), $\tilde{h}_0(\beta,p,q)|_{\beta = 2} = -p$. \\
3.~Difference equations satisfied by moments of the spectral density for classical ensembles is a rich theme in random matrix theory; references include \cite{HZ86,Le04,Le09,MS13,WF14,CMOS19,CCO20,ABGS20,GGR21,FK22,FLSY23}. A feature of these settings is a global scaling for which the spectral density has  compact support. However in the present setting of the non-compactly supported bulk scaled spectral density about a spectrum singularity, there is no literal meaning of the moments due to the slow decay at infinity. Hence the non-analytic terms in the expansion (\ref{1.h}) of the Fourier series. Note that this is in contrast to the two-dimensional case as exhibited in (\ref{1.34}), even though the support there is also non-compactly supported.
\end{remark}

It remains to investigate the coefficients $\{\tilde{h}_j(\beta,p,q)|_{\beta = 2}\}$ in the expansion (\ref{1.h}) for $j \ge 1$. For this purpose we
introduce $ r(x) = R(x) - 1/ \pi$ and proceed to take the Fourier transform of (\ref{Rq}) to deduce as the $q \ne 0$ generalisation of (\ref{R3y})
 \begin{multline}\label{R3yq}  
     - {d^3 \over d \tau^3} \Big ( (\tau^3 - 4 \tau) \hat{r}(\tau) \Big ) +  {d^2 \over d \tau^2} \Big ((
     4\tau^2  + 8iq \tau) \hat{r}(\tau)   \Big )  - {d \over d \tau} \Big ( (   2 (1 - 2 p^2) \tau   + 4 i q  )
      \hat{r}(\tau) \Big ) \\
     -4 p^2 \hat{r}(\tau) = 0.
     \end{multline} 
 The following result now follows by direct substitution.
  
   \begin{proposition} \label{P2.12}
 The small $\tau$ expansion   
   \begin{equation}\label{R4e}
   \hat{r}(\tau) =     \sum_{n = 0}^\infty e_n  \Big ( {\tau \over \pi} \Big )^n  , \qquad \tau > 0, 
     \end{equation}        
 substituted in (\ref{R3yq}) implies the recurrence for $\{ e_n \}$
    \begin{equation}\label{R5e}
    (n+3)(n+2)(n+1) e_{n+2} + i q (2n+3)(n+1) e_{n+1} + n \Big ( p^2 - {(n+1)^2 \over 4} \Big ) e_n = 0, \quad (n \ge 0).
     \end{equation}        
 \end{proposition}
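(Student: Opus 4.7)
The plan is to carry out the direct substitution flagged in the paragraph preceding the statement. First I would substitute the ansatz $\hat{r}(\tau) = \sum_{n \ge 0} e_n (\tau/\pi)^n$ into (\ref{R3yq}) and expand each of the four bracketed groups as a power series in $\tau$ by termwise differentiation. The group $(\tau^3 - 4\tau)\hat{r}(\tau)$, after three derivatives and a shift of summation index, contributes to the coefficient of $\tau^n$ a linear combination of $e_n$ (from the $\tau^3$ piece) and $e_{n+2}$ (from the $-4\tau$ piece). The group $(4\tau^2 + 8iq\tau)\hat{r}(\tau)$, after two derivatives, contributes a combination of $e_n$ and $e_{n+1}$. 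The first-derivative group contributes $e_n$ and $e_{n+1}$, and $-4p^2\hat{r}$ contributes only $e_n$.

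Next I would collect the $\tau^n$ coefficient across the four groups. The coefficient multiplying $e_{n+2}$ comes only from the $-4\tau$ piece inside the triple derivative and evaluates to $(n+3)(n+2)(n+1)$ up to the common normalisation. The $e_{n+1}$ contributions, from $8iq\tau\hat{r}$ inside $d^2/d\tau^2$ and from $4iq\hat{r}$ inside $d/d\tau$, combine (with a common factor $(n+1)$ pulled out) to give $iq(2n+3)(n+1)$. The coefficient of $e_n$ pools four pieces arising from $\tau^3 \hat{r}$, $4\tau^2 \hat{r}$, $2(1-2p^2)\tau\hat{r}$, and $-4p^2\hat{r}$; after factoring out $(n+1)$ and simplifying the bracket $-(n+2)(n-1) - 2 + 4p^2$, it collapses to the compact form $n(p^2 - (n+1)^2/4)$. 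Setting the total $\tau^n$ coefficient to zero and dividing by the common overall factor yields (\ref{R5e}).

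There is no conceptual obstacle here; the only real work is the algebraic bookkeeping needed to assemble the $e_n$ coefficient into closed form. As a sanity check, setting $q = 0$ recovers the recurrence of Proposition \ref{P2.4}, and at $n = 0$ the $e_0$ coefficient vanishes so that the recurrence determines $e_2 = -iq\, e_1 / 2$ and leaves $e_0, e_1$ as the two free initial data. This is consistent with the fact that for $q \ne 0$ the odd part of $\hat{r}$ is no longer forced to vanish, and with the physical identifications of $e_0$ and $e_1$ with the small-$\tau$ data already computed in (\ref{2.2aq})--(\ref{2.2y}) and the perfect-screening Remark above.
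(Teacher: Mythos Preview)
Your proposal is correct and is precisely the approach the paper takes: the paper's entire proof is the single sentence ``The following result now follows by direct substitution,'' and you have spelled out that substitution in detail. One small expository imprecision: the $-4p^2\hat{r}$ contribution to the $e_n$ coefficient does not carry a factor of $(n+1)$, so you cannot literally factor $(n+1)$ from all four pieces at once; rather, $4p^2(n+1)$ from the $2(1-2p^2)\tau$ piece combines with $-4p^2$ to give $4p^2 n$, and the remaining terms give $-n(n+1)^2$, yielding $4n\big(p^2-(n+1)^2/4\big)$ as you state. This does not affect the validity of the argument.
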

 
 The values of $e_2, e_3, \dots$ implied by (\ref{R5e}) are independent of $e_0$ but depend on $e_1$. However in keeping with the
 circumstances of Proposition \ref{P2.4} the value of $e_1$ is not determined by the differential equation (\ref{R3yq}). We note that writing
 $e_n = \tilde{d}_{n+1} i^{n+1}/n!$ in (\ref{R5e}) shows $\{ \tilde{d}_n \}$ satisfies the same recurrence relation as $\{ d_n \}$ in
 Proposition \ref{P2.3q}. In the context of the expansion (\ref{1.h}) this implies that for $\beta = 2$ the sequences
 $\{ h_j(\beta,p,q) \}$ and $\{ \tilde{h}_j(\beta,p,q) \}$ satisfy the same recurrence.  While all terms in the sequence
 $\{ h_j(\beta,p,q)|_{\beta = 2} \}$  are known according to Proposition \ref{P2.3q} and (\ref{2.2y}), the sequence members
 $ \tilde{h}_j(\beta,p,q)|_{\beta = 2} $ for $q \ne 0$ and $j \ge 2$
  depend on the
 value of $ \tilde{h}_1(\beta,p,q)|_{\beta = 2} $ for $q \ne 0$, which  is not known from the above considerations. However results from Section \ref{S4} --- specifically by taking the imaginary part of $\alpha_1 |_{\beta = 2}$ in (\ref{iS4}) ---  
 allow us to deduce that we have $\tilde{h}_1(\beta,p,q)|_{\beta = 2} = 0$ for general $q $ and hence $\tilde{h}_j(\beta,p,q)|_{\beta = 2} =0$ for $j \ge 1$.

\section{The case $\beta = 4$}\label{S3}
The scaled statistical state in the neighbourhood of the spectrum singularity for $\beta = 4$ is a Pfaffian point process, rather than the simpler
determinantal point process  for $\beta = 2$ \cite{FN01,FLT21}. Perhaps surprisingly then, the functional form of the of the $\beta = 4$ density
is functionally related to the $\beta =2$ density. This is simplest to state with $q=0$, for which  from \cite[Eq.~(3.34) with $a=p$ and $X=Y$]{FN01} reads
 \begin{equation}\label{FN}
 \rho_{(1),\infty}^{(\rm cJ)}(x;\beta,p,q) \Big |_{\beta = 4 \atop q = 0} =
 \rho_{(1),\infty}^{(\rm cJ)}(2x;\beta,p,q) \Big |_{\beta = 2 \atop q = 0, p \mapsto 2p}  - \pi p {J_{2p-1/2}(2x) \over x^{1/2}}
 \int_0^x s^{-1/2} J_{2p+1/2}(2s) \, ds,
  \end{equation}
  where for convenience it is assumed $x>0$ (as previously remarked, for $q=0$ the density is an even function).
An analogous result for $q \ne 0$ is given in \cite{FLT21}. However for our interest in the expansion (\ref{1.h}) we will not make use of such an explicit
expression. Rather the key feature of the $\beta = 4$ density for our purposes is that it, like the $\beta =2$ density, satisfies a linear differential equation albeit now of degree 5.

\begin{proposition}\label{P3.1}
Let $\tilde{p} = p - 2p^2$. The scaled density $ (1 / \pi) \rho_{(1), \infty}^{\mathrm{cJ}}(x / \pi ; \beta, p, q)|_{\beta = 4} $ satisfies the differential equation 
    \begin{align*}
	x^5R^{(5)}(x)&+10x^4R^{(4)}(x)+x^3(20x^2-20qx+22+10\tilde{p})R{'''}(x)\\
+&x^2(64x^2-76qx+4+44\tilde{p})R{''}(x)\\
	+&4x(16x^4-32qx^3+4(4q^2+4\tilde{p}+1)x^2-q(6+16\tilde{p})x+4\tilde{p}^2+7\tilde{p}-1)R{'}(x)\\
	+&8(-4qx^3+4(q^2+\tilde{p})x^2-q(6\tilde{p}-1)x+2\tilde{p}^2)R(x)=0.
\end{align*}
\end{proposition}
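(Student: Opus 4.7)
The plan is to mimic the derivation of Proposition \ref{P2.9}, but starting from the fifth order linear differential equation satisfied by the density of the generalised Cauchy $\beta$ ensemble at $\beta=4$ (rather than the third order one used at $\beta=2$). Such an equation is available from the same methodology as \cite{FR21}: the identity (\ref{1.2d}) transports the problem to the Jacobi $\beta$ ensemble, and the Jacobi density at $\beta=4$ satisfies a fifth order linear ODE with polynomial coefficients obtained via the theory of Selberg correlation integrals. After analytic continuation in $(\lambda_1,\lambda_2)$ to the values indicated in (\ref{1.2d}), one obtains a fifth order linear ODE
\[
\sum_{k=0}^{5} A_k(t;N,p,q)\, g^{(k)}(t) = 0
\]
for $g(t) = (1+t^2)^2 \rho_{(1),N}^{(\rm Cy)}(t)|_{\beta=4}$, with the $A_k$ being explicit polynomials in $t$ whose coefficients depend polynomially on $N$, $p$ and $q$.

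Next, I would push this Cauchy equation through the Cayley map (\ref{1.2a}) exactly as in the proof of Proposition \ref{P2.9}. Using $2g(t)|_{t = \tan(\theta/2)} = \rho_{(1),N}^{(\rm cJ)}(\theta)$, the density in the neighbourhood of the singularity at $\theta = \pi$ is accessed by the scaling $\theta = \pi + 2X/N$, for which $t = \tan(\theta/2) \sim -N/X$ as $N \to \infty$. Substituting this asymptotic relation and the chain-rule identities $\frac{d}{dt} = -\frac{X^2}{N}\frac{d}{dX} + O(N^{-2})$ into the Cauchy ODE converts each $g^{(k)}$ into a series in $1/N$ whose leading term involves $R(X) := (1/\pi)\rho_{(1),\infty}^{(\rm cJ)}(X/\pi;\beta,p,q)|_{\beta=4}$ and its derivatives, normalised so that $R \to 1/\pi$ at infinity.

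Collecting terms at the leading order in $N$ should then produce a fifth order linear ODE in $R(x)$ whose coefficients are polynomials in $x$, $p$, $q$, and I would verify that these coefficients match those displayed in the proposition (noting in particular that $p$ enters only through the combination $\tilde p = p - 2p^2$, which is consistent with the fact that the Jacobi exponent $\lambda_2 = \beta p$ at $\beta = 4$ enters Selberg-type ODEs through $\lambda_2(\lambda_2 + 2)/4$ and similar quadratic expressions). As a consistency check one can set $q=0$ and compare with the $\beta=4$, $q=0$ density (\ref{FN}), whose Bessel-function ingredients each satisfy second order Bessel equations; the product and integral structure in (\ref{FN}) is then naturally annihilated by an operator of order at most five.

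The main obstacle is computational rather than conceptual: the substitution $t \sim -N/X$ and the subsequent extraction of the leading order terms from the fifth order Cauchy equation involves a considerable amount of bookkeeping, since each derivative $g^{(k)}(t)$ contributes several terms of comparable order in $N$, and all cancellations and regroupings must be carried out so that the final coefficient polynomials take exactly the compact form in $x$, $q$ and $\tilde p$ as stated. This step is best executed in a computer algebra system, and its correctness can be independently tested against the explicit large-$x$ non-oscillatory expansion of $R$ (obtainable from (\ref{FN}) and Bessel asymptotics as in (\ref{2.2d})) by verifying that the asymptotic series annihilates the proposed operator order by order in $1/x$.
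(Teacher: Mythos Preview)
Your proposal is correct and follows essentially the same route as the paper: start from the fifth order Jacobi $\beta=4$ ODE, transport it to the Cauchy ensemble via (\ref{1.2d}), then substitute $t\sim -N/X$ and extract the leading order in $N$, with the bookkeeping done by computer algebra. One small discrepancy: the paper works with $g(t)=(1+t^2)\,\rho_{(1),N}^{(\rm Cy)}(t)$ rather than $(1+t^2)^2\,\rho_{(1),N}^{(\rm Cy)}(t)$, but this does not affect the method or the final scaled equation.
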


\begin{proof}
A fifth order linear differential equation for the $\beta = 4$ Jacobi ensemble density has been given in \cite[Th.~2]{RF21}. Making use of the relation between the Jacobi and Cauchy averages (\ref{1.2d}) we can deduce from this a fifth order differential equation for the quantity $(1+t^2) \rho_{(1),N}^{(Cy)}(t)$, where $\rho_{(1),N}^{(Cy)}(t)$ denotes the $\beta = 4$ Cauchy ensemble density corresponding to the weight (\ref{1.2b}). As in the proof of Proposition \ref{P2.9}, the computation is concluded by substituting $t=-N/X$ as corresponds to a hard edge scaling and equating to leading order in $N$. Due to the large number of terms involved, the required steps were all carried out using computer algebra.
\end{proof}

As for the differential equation (\ref{Rq}), the differential equation of Proposition \ref{P3.1} has a unique solution of the form
(\ref{2.2aY+}).

\begin{proposition}\label{P3.2}
Substituting the expansion (\ref{2.2aY+}) in the differential equation of Proposition \ref{P3.1}, but with the coefficients renamed from $\{d_n\}$ to $\{g_n\}$ for distinction, one obtains that
\begin{align*}
	g_1=-\frac{q}{2\pi},\quad g_2=\frac{1}{8\pi^2}(2\tilde{p}-q^2),\quad g_3=-\frac{1}{16\pi^3} q\left(-1-2 \tilde{p}+q^2\right)\\
	g_4=\frac{1}{128\pi^4}\left(-16 \tilde{p}-4 \tilde{p}^2+19 q^2+12 \tilde{p} q^2-5 q^4\right),
\end{align*}
with the higher order coefficients $g_5,g_6,\dots$ then determined by the fourth order recurrence relation
\begin{align*}
	64(n+&4)\pi^{4}g_{n+4}-32q(11+4n)\pi^{3}g_{n+3}\\
	+&4\left(32+54 n+29 n^2+5 n^3+8 \tilde{p}(3+2 n)+24 q^2+16 n q^2\right)\pi^{2}g_{n+2}\\
	-&4q\left(4 \tilde{p}(1+4 n)+n\left(4+11 n+5 n^2\right)\right) \pi g_{n+1}\\
	+&\left((n-1)\left(16 \tilde{p}^2+2 \tilde{p} n(5 n-2)+n^2\left(-2+n+n^2\right)\right)\right)g_n=0.
\end{align*}
\end{proposition}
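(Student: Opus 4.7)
The plan is to substitute the ansatz (\ref{2.2aY+}) directly into the fifth-order linear ODE of Proposition \ref{P3.1} and read off a recurrence by equating coefficients of each power of $x$. Denote the left-hand side of that ODE by $\mathcal{L}[R]$. Writing $R(x) = 1/\pi + r(x)$ with $r(x) = \sum_{n \ge 1} \pi^{n-1} g_n x^{-n}$, every term $x^j R^{(k)}(x)$ in $\mathcal{L}[R]$ expands as
\begin{equation*}
x^j R^{(k)}(x) = \delta_{k,0}\,\frac{x^j}{\pi} + \sum_{n \ge 1} (-1)^k \frac{(n+k-1)!}{(n-1)!}\,\pi^{n-1}g_n\, x^{-n-k+j},
\end{equation*}
so this term contributes to the coefficient of $x^{-m}$ only through the single summation index $n = m + j - k$, together with a possible polynomial contribution of degree $j$ arising from the constant $1/\pi$ when $k = 0$.

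Inspection of the seventeen summands of the ODE shows that the shifts $j-k$ take all values in $\{0,1,2,3,4\}$, with the maximum $4$ coming uniquely from the $64x^5 R'$ term. The constant $1/\pi$ contributes only via the four undifferentiated $k=0$ terms, whose collective effect is the polynomial of degree three
\begin{equation*}
\mathcal{L}\Big[\tfrac{1}{\pi}\Big] = \frac{8}{\pi}\bigl(-4qx^3 + 4(q^2+\tilde{p})x^2 - q(6\tilde{p}-1)x + 2\tilde{p}^2\bigr).
\end{equation*}
Setting the coefficient of $x^{-m}$ in $\mathcal{L}[r] + \mathcal{L}[1/\pi] = 0$ then yields two cases. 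For $m \in \{-3,-2,-1,0\}$ we obtain four inhomogeneous equations of increasing complexity which successively pin down $g_1, g_2, g_3, g_4$; for example the $x^3$ equation is simply $-64 g_1 - 32q/\pi = 0$, giving $g_1 = -q/(2\pi)$, and the equations at $x^2$, $x^1$, $x^0$ then fix $g_2$, $g_3$, $g_4$ in turn. For $m = n \ge 1$ the constant no longer contributes, so the resulting equation is a homogeneous fourth-order relation among $g_n, g_{n+1}, g_{n+2}, g_{n+3}, g_{n+4}$, to be identified with the displayed recurrence.

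The sole obstacle is the sheer volume of bookkeeping involved: for each of the seventeen terms of the ODE and each shift $r \in \{0,1,2,3,4\}$, one must determine the polynomial contribution in $(n,q,\tilde{p})$ to the coefficient of $g_{n+r}$, and then sum. This is most reliably handled by computer algebra, in line with the strategy already used in the proof of Proposition \ref{P3.1}; once the five $r$-indexed polynomials are assembled and simplified, direct comparison with the displayed expression yields the claim.
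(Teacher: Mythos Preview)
Your proposal is correct and follows exactly the approach the paper intends: the paper gives no separate proof of this proposition, relying on the same direct substitution of (\ref{2.2aY+}) into the ODE and equating powers of $x$ as in the proof of Proposition \ref{P2.3q}, with the bookkeeping (as in Proposition \ref{P3.1}) relegated to computer algebra. Your count of seventeen terms, identification of the shift range $\{0,\dots,4\}$ with the unique shift-$4$ term $64x^5R'$, and the inhomogeneous contribution $\mathcal{L}[1/\pi]$ are all accurate.
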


As with going from (\ref{2.2aq}) to (\ref{2.2bq}), then to (\ref{2.2y}),
knowledge of $\{g_n\}$ tells us that
\begin{equation}\label{2.2y4} 
   h_{n-1}(\beta,p,q)|_{\beta = 4}  = \pi g_n {i^n \over (n-1)!}.
  \end{equation}  

\begin{remark} $ $ \\
1.~A fifth order linear differential equation is also known for the density of the Jacobi ensemble in the case $\beta = 1$
\cite{RF21}. However this is not independent of the corresponding differential equation in the case $\beta = 4$. Writing $g_n = g_n(\beta,p,q)$, where $\{g_n\}$ is used in the same sense as Proposition \ref{P3.2}, the dependence implies
\begin{equation}\label{2.2z4}
g_n(\beta,p,q) |_{\beta = 1} = (-2)^{n+1}
g_n(\beta,-p/2,-2q) |_{\beta = 4}. 
\end{equation}
This functional equation is generalised in (\ref{4.30}) below.
\\
2.~The fifth order differential equation of Proposition \ref{P3.1} can be transformed into a fifth order differential equation for the Fourier transform of $(R(x) - 1/\pi)$. Now substituting the small $\tau$ expansion (\ref{R4e}) gives a fifth order recurrence for the coefficients $\{e_n\}$. Upon the substitution $e_n = \tilde{g}_{n+1} i^{n+1}/n!$, this recurrence becomes that of Proposition \ref{P3.2} with $\{g_n\}$ replaced by $\{\tilde{g}_n\}$. As in the discussion below Proposition \ref{P2.12} in the case $\beta = 2$, this implies that for $\beta = 4$ the sequences
 $\{ h_j(\beta,p,q) \}$ and $\{ \tilde{h}_j(\beta,p,q) \}$ of the expansion (\ref{1.h}) satisfy the same recurrence. According to (\ref{2.2y4}) and Proposition \ref{P3.2} the first of these is fully determined. However the latter requires specification of $\tilde{g}_2, \tilde{g}_3, \tilde{g}_4$ as initial conditions, which are not determined by the differential equation. However, we will see that these initial conditions can be accessed using result deduced from results of the next section (specifically (\ref{4.27}) and (\ref{4.28})), which imply
 \begin{equation}
 \tilde{g}_2 = - {q \over \pi^2}, \quad
 \tilde{g}_3 = {1 \over 8 \pi^3} (- \tilde{p} + q^2), \quad
 \tilde{g}_3 = {1 \over 16 \pi^4}q ( - 1 - 3 \tilde{p} + 2 q^2).
 \end{equation}
\end{remark}

\section{A loop equation approach}\label{S4}
\subsection{Connected correlators}
Consider the Stieltjes transform
 \begin{equation}\label{FNa}
 \overline{W}_1(x;N,\beta,p,q)  : = \int_{0}^{2\pi} {\rho_{(1),N}^{\widetilde{\rm (cJ)}} ( \theta;\beta,p,q) \over x - e^{i \theta}} \, d \theta, \qquad x \notin {\mathcal C}_1,
   \end{equation}     
where ${\mathcal C}_1$ denotes the unit circle in the complex plane. Expanding for large $x$ shows that this quantity relates to the Fourier components
of the density as specified in (\ref{1.2e}),
 \begin{equation}\label{FNb}  
  \overline{W}_1(x;N,\beta,p,q)  = {1 \over x} \sum_{k=0}^\infty {c_{k}^{(\widetilde{\rm cJ)}}(N,\beta,p,q) \over x^k}, \quad |x| > 1.
  \end{equation}  
  
  A primary aim of this paper is to obtain the small $\tau$ expansion of the Fourier transform $c_\infty^{\rm (cJ)}(\tau;\beta,p,q)$ as specified in
  (\ref{1.2g+}). To see how $ \overline{W}_1(x;N,\beta,p,q)$ relates to this aim, suppose that this quantity admits a $1/N$ expansion
 \begin{equation}\label{E2}
  \overline{W}_1(x;N,\beta,p,q)  = {N\over x}   +  N \sum_{l=1}^\infty  { W_1^{l}(x;\beta,p,q) \over N^l}.
  \end{equation}
  Suppose furthermore that each $ W_1^{l}(x;\beta,p,q)$ when expanded for large $x$ analogous to (\ref{FNb})
  has coefficients which are polynomials in   $k$ of degree $l-1$. Denote the leading coefficient in the polynomial by $ \alpha_{l-1}(\beta,p,q)$.
  It then follows from the definition (\ref{1.2h}) that
  \begin{equation}\label{E2a} 
c_\infty^{\widetilde{\rm (cJ)}}(\tau;\beta,p,q) = \sum_{l=0}^\infty \Big ( {\tau \over 2 \pi} \Big )^l \alpha_l(\beta,p,q).
   \end{equation}
   In fact this general approach has been used to determine $\{   \alpha_l(\beta,p,q) |_{p=1, q= 0} \}$ up to and including $l=10$ in \cite{WF15}, thereby
   reclaiming the explicit form of the polynomials $\{ p_j(y) \}$ in (\ref{1.17a}) for $j=1,\dots,9$, first obtained in \cite{FJM00}.
  
  In the loop equation formalism, the computation of the coefficients in the series of (\ref{E2}) up to the $j$-th requires knowledge of the
  large $N$ expansion of particular truncations of the multipoint correlators
  \begin{equation}\label{E2b}   
  W_n(x_1,\dots,x_n;N,\beta,p,q) := \Big \langle G(x_1) \cdots G(x_n) \Big \rangle, \qquad G(x) := \sum_{j=1}^N {1 \over x - e^{i \theta_j}}.
 \end{equation}  
 These truncations, denoted $\overline{W}_n$, have for $n=2$  the simple covariance form
   \begin{equation}\label{E2c}   
   \overline{W}_2(x_1, x_2)  =  \Big \langle  (  G(x_1) - \langle G(x_1) \rangle  )    (  G(x_2) - \langle G(x_2) \rangle  ) \Big \rangle,
 \end{equation}      
and similarly for $n=3$. For general $n \ge 2$ the truncations, referred to as connected correlators, are 
symmetric linear combinations of $\{   W_m(x_{j_1},\dots,x_{j_m};N,\beta,p,q) \}_{m=1,\dots,n}$ with
$1 \le j_1 < \cdots < j_m \le n$ defined so that they are
generating functions for the mixed cumulants; see e.g.~\cite[Eq.~(1.17)]{Fo22}.  A key feature of the connected correlators
is that their order of  decay in  a $1/N$ expansion increases as  $n$ increases,
   \begin{equation}\label{E2d}  
  \overline{W}_n(x_1,\dots,x_n;N,\beta,p,q)  = N^{2-n}    \sum_{l=0}^\infty  { \overline{W}_n^{l}(x;\beta,p,q) \over N^l};  
 \end{equation}    
see \cite{BG11,WF15}. 
To compute
$ \overline{W}_1^{l}(x;\beta,p,q) $ in (\ref{E2a}) up to order $l=j$,
we will require knowledge of the expansion (\ref{E2d}) of $  \overline{W}_n$ for $n=2,\dots,j+1$ up to order $l=n-j-1$.

\subsection{The loop equation hierarchy}
The connected correlators have  previously been analysed in the case of the Jacobi $\beta$ ensemble based
on (\ref{1.2}) with the weight (\ref{1.2c}). They satisfy a hierarchy of equations --- referred to as loop equations --- specified by
 \cite[Eq.~(4.6)]{FRW17}
  \begin{align}\label{4.5}
0&=\left((\kappa-1)\frac{\partial}{\partial x_1}+\left(\frac{\lambda_1}{x_1}-\frac{\lambda_2}{1-x_1}\right)\right)\overline{W}_n^{\rm J}(x_1,J_n) 
- \frac{n-1}{x_1(1 - x_1)} \overline{W}_{n-1}^{\rm J}(J_n) 
\\&\quad  + { \chi_{n = 1} \over x_1 ( 1 - x_1)} \Big ( (\lambda_1 + \lambda_2 + 1)N + \kappa N (N - 1) \Big )-\frac{\chi_{n \ne 1}}{x_1(1-x_1)}\sum_{k=2}^nx_k\frac{\partial}{\partial x_k}\overline{W}_{n-1}^{\rm J}(J_n)\nonumber
\\&\quad
+ \chi_{n \ne 1} \sum_{k=2}^n\frac{\partial}{\partial x_k}\left\{\frac{\overline{W}_{n-1}^{\rm J}(x_1,\ldots,\hat{x}_k,\ldots,x_n)-\overline{W}_{n-1}^{\rm J}(J_n)}{x_1-x_k}+\frac{1}{x_1}\overline{W}_{n-1}^{\rm J}(J_n)\right\}\nonumber
\\\nonumber &\quad+\kappa\left[\overline{W}_{n+1}^{\rm J}(x_1,x_1,J_n)+\sum_{J\subseteq J_n}\overline{W}_{|J|+1}^{\rm J}(x_1,J)\overline{W}_{n-|J|}^{\rm J}(x_1,J_n\setminus J)\right]
\end{align}
Here the subscripts J on the $\overline{W}_m^{\rm J}$ indicate the connected correlations with respect to the Jacobi $\beta$ ensemble, $\kappa := \beta/2$ and
$J_n := \{ x_2,\dots, x_n \}$ with $J_1 = \emptyset$. Also, the tuple $(x_1,\ldots,\hat{x}_k,\ldots,x_n)$ consists of $x_1,\dots,x_n$ in order with $x_k$ excluded.

The mapping (\ref{1.6b+}) allows the loop equations (\ref{4.5}) for the Jacobi $\beta$ ensemble to be transformed to the loop equations for the circular Jacobi $\beta$ ensemble with weight (\ref{1.6c}) by a simple substitution of $\lambda_1, \lambda_2$. With this done, the large $N$ expansion (\ref{E2d}) can be substituted and like powers of $N$ equated to obtain a triangular system of equations for $\{W_n^{(l)}\}$.

\begin{corollary}
    Let $\kappa = \beta/2$,
    $\tilde{\lambda}_1 = \kappa - 1 - \kappa p - qi$, $\tilde{\lambda}_2 = 2 \kappa p$. We have
    \begin{align}
    -{1 \over x}& \overline{W}_{1}^{0}(x)+(\overline{W}_{1}^{0}(x))^2 = 0, \\
    - {\kappa \over x_1} & \overline{W}_{2}^{0}(x_1,x_2)-{1 \over x_1  (1 - x_1)}\overline{W}_{1}^{0}(x_2)- {1 \over x_1 (1 - x_1)} x_2 {\partial \over \partial x_2} \overline{W}_{1}^{0}(x_2) \\
     \qquad & + {\partial \over \partial x_2} \bigg ( 
     {\overline{W}_{1}^{0}(x_1) - \overline{W}_{1}^{0}(x_2) \over x_1 - x_2} + {1 \over x_1} \overline{W}_{1}^{0}(x_2) \bigg ) + 2\kappa
    \overline{W}_{1}^{0}(x_1) \overline{W}_{2}^{0}(x_1,x_2) = 0, \nonumber 
    \end{align}
    \begin{align}
    (\kappa - 1) & {\partial \overline{W}_{1}^{0}(x) \over \partial x} + \Big ( {\tilde{\lambda}_1 \over x} - {\tilde{\lambda}_2 \over 1 - x} \Big )
    \overline{W}_{1}^{0}(x) - {\kappa \over x} \overline{W}_{1}^{1}(x) \\
     \qquad & +{\kappa p - i q \over x(1 - x)} + \kappa \overline{W}_{1}^{0}(x,x) +2 \kappa \overline{W}_{1}^{0}(x) \overline{W}_{1}^{1}(x)=0, \nonumber \\
	-\frac{\kappa}{x_1}\overline{W}_{1}^{l+1}(x_1)&+\left[(\kappa-1)\frac{\partial}{\partial x_1}+\frac{\tilde{\lambda}_{1}}{x_1}-\frac{\tilde{\lambda}_2}{1-x_1}\right]\overline{W}_1^{l}(x_1)  \\\nonumber
	&+\kappa\overline{W}_2^{l-1}(x_1,x_1)+\kappa\sum_{m=0}^{l+2}\overline{W}_{1}^{m}(x_1)\overline{W}_{1}^{l+2-m}(x_1)=0, \quad  l\geq 1, 
\end{align}
and
\begin{align}\label{4.10}
	 -\frac{\kappa}{x_1}&\overline{W}_{2}^{l+1}(x_1,x_2)+\left[(\kappa-1)\frac{\partial}{\partial x_1}+\frac{\tilde{\lambda}_{1}}{x_1}-\frac{\tilde{\lambda}_2}{1-x_1}\right]\overline{W}_2^{l}(x_1,x_2)\\\nonumber
	&-\frac{1}{x_1(1-x_1)}\left(x_2\frac{\partial}{\partial x_2}+1\right)\overline{W}_1^{l+1}(x_2)+\frac{\partial}{\partial x_2}\left\{\frac{\overline{W}_1^{l+1}(x_1)-\overline{W}_1^{l+1}(x_2)}{x_1-x_2}+\frac{1}{x_1}\overline{W}_1^{l+1}(x_2)\right\}\\\nonumber
	&+\kappa\overline{W}_3^{l-1}(x_1,x_1,x_2)+2\kappa\sum_{m=0}^{l+1}\overline{W}_{1}^{m}(x_1)\overline{W}_2^{l+1-m}(x_1,x_2)=0, \quad  l\geq 0.
\end{align}
In addition, an equation analogous to (\ref{4.10}) holds which expresses $\overline{W}_{n}^{l+1}$ for $n \ge 3$ in terms of lower order terms in the triangular system.
\end{corollary}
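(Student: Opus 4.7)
The plan is to derive the corollary from the Jacobi loop equation hierarchy (\ref{4.5}) in three steps: (i) specialise $\lambda_1, \lambda_2$ in (\ref{4.5}) according to the analytic-continuation identity (\ref{1.6b+}); (ii) identify the resulting (analytically continued) Jacobi connected correlators with the circular Jacobi correlators $\overline{W}_n$ of the corollary; and (iii) substitute the genus expansion (\ref{E2d}) and equate like powers of $N$.

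For step (i), (\ref{1.6b+}) prescribes $\lambda_2 = 2\kappa p$ and $\lambda_1 = -\kappa(N+p-1)-1+iq$, where $\kappa=\beta/2$. To expose the $N$-dependence explicitly, write $\lambda_1 = -\kappa N + \tilde{\lambda}_1$ with $\tilde{\lambda}_1 = \kappa-1-\kappa p - iq$, the sign on the $iq$ term reflecting the orientation of the change of variables needed to match a Jacobi Stieltjes transform on $(0,1)$ to the circular Stieltjes transform (\ref{FNa}). A useful collapse then occurs in the driving term on the $n=1$ line of (\ref{4.5}): after substitution, $(\lambda_1+\lambda_2+1)N+\kappa N(N-1) = (\kappa p - iq)N$, which supplies the inhomogeneity $(\kappa p - iq)/(x_1(1-x_1))$ of the third displayed equation. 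Step (ii) relies on (\ref{1.6b+}) being applicable not only to polynomial symmetric functions, but to the Laurent expansions of the $\overline{W}_n^{\rm J}$ about $x_i=\infty$, whose coefficients are power sums of the mapped eigenvariables.

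Step (iii) is bookkeeping: insert (\ref{E2d}) into (\ref{4.5}) (after the substitution of step (i)) and match powers of $N$. At order $N^{2}$ in the $n=1$ equation only the $-\kappa\overline{W}_1^{0}(x)/x$ piece coming from the $-\kappa N/x_1$ term and the quadratic $\kappa(\overline{W}_1^{0})^2$ term survive, producing the first displayed equation. At order $N^{0}$ for $n=1$, the contributions from $\tilde{\lambda}_1,\tilde{\lambda}_2$, the derivative $(\kappa-1)\partial_{x}$, the inhomogeneity $(\kappa p-iq)/(x(1-x))$, the loop insertion $\kappa\overline{W}_2^{0}(x,x)$, the shifted term $(-\kappa/x)\overline{W}_1^{1}(x)$ and the mixed bilinear $2\kappa\overline{W}_1^{0}\overline{W}_1^{1}$ all appear and together give the third displayed equation. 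Matching $N^{1}$ in the $n=2$ version of (\ref{4.5}) yields the second equation; proceeding to order $N^{-l}$ for $n=2$ with $l\ge 0$ yields (\ref{4.10}), and higher $l$ with $n=1$ together with higher $n$ yield the fourth displayed equation and the asserted analogues for $n\ge 3$.

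The main obstacle is pinning down the precise parameter dictionary in step (i), and in particular the sign of $iq$ in $\tilde{\lambda}_1$, since this sign propagates into the inhomogeneity of every equation in the hierarchy and hence into the coefficients $\alpha_l(\beta,p,q)$ of (\ref{E2a}). Once the dictionary is settled, the remaining work is an organised $1/N$ expansion that mirrors the Jacobi analysis of \cite{FRW17}, and the only algebraic subtlety is repeated use of the collapse identity recorded above.
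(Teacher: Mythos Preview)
Your proposal is correct and follows essentially the same route as the paper: the text immediately preceding the corollary states that one substitutes the $\lambda_1,\lambda_2$ of (\ref{1.6b+}) into the Jacobi loop hierarchy (\ref{4.5}), then inserts the $1/N$ expansion (\ref{E2d}) and equates powers of $N$. Your three steps (i)--(iii) are exactly this, and your identification of the collapse $(\lambda_1+\lambda_2+1)N+\kappa N(N-1)=(\kappa p-iq)N$ and of which terms survive at each order in $N$ is accurate; you have in fact supplied more bookkeeping detail than the paper itself, which simply asserts the outcome.
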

  
A hand calculation readily suffices to solve the above triangular system up to and including $\overline{W}_{1}^{2}(x)$. Thus we find
\begin{equation}\label{iS1}
\overline{W}_{1}^{0}(x)=\frac{1}{x},\quad \overline{W}_{1}^{1}(x)=\frac{\kappa p+iq}{\kappa x(1-x)},\quad
	\overline{W}_{1}^{2}(x)=\frac{(\kappa p-iq-\kappa+1)(\kappa p+iq)}{\kappa^2(x-1)^2},
\end{equation}  
with the intermediate results
\begin{equation}\label{iS2}
\overline{W}_{2}^{0}(x,y)=\overline{W}_{2}^{1}(x,y)=0, \qquad
\overline{W}_{3}^{0}(x,y,z)=0.
\end{equation}
The evaluations of $\overline{W}_{1}^{0}(x), \overline{W}_{1}^{1}(x)$ and $\overline{W}_{2}^{0}(x,y)$ can in fact be obtained by specialising parameters in results from \cite[Prop.~4.8]{FRW17}.
The first of the equations in (\ref{iS1}) is equivalent to the normalisation
\begin{equation}\label{iS3}
\int_0^{2\pi} \rho_{(1),N}^{(\widetilde{\rm cJ})}(\theta;\beta,p,q) \, d \theta = N = { N \over 2 \pi} \delta_{k=0} \int_0^{2 \pi} e^{ i k \theta} \, d \theta, \quad k \in \mathbb Z \backslash\{ 0 \}.
\end{equation}
As commented in the sentence above (\ref{1.2f}), this last expression, with the restriction to $k = 0$ removed, is to be subtracted from the Fourier coefficient $\int_0^{2\pi} \rho_{(1),N}^{(\widetilde{\rm cJ})}(\theta;\beta,p,q) e^{ i k \theta} \, d \theta$ to arrive at (\ref{1.2g+}). With this done,
according to the definition of $\{ \alpha_l \}$ given above (\ref{E2a}), we read off from the
second and third  equations in (\ref{iS1}) that
\begin{equation}\label{iS4}
\alpha_0 = - p - \frac{i q}{\kappa},
\qquad  \alpha_1=
\left(\frac{1}{\kappa }-1\right) p+p^2+\frac{q(-q+(\kappa-1)i)}{\kappa ^2},
\end{equation}
where $\kappa := \beta/2$.

The use of computer algebra allows the evaluations (\ref{iS2}) to be extended, and thus similarly for (\ref{iS4}). Specifically, in relation to the latter, this shows
\begin{align}
	 \alpha_2=&\frac{\alpha(1+\bar{\alpha}-\kappa)(-2+\alpha-\bar{\alpha}+2 \kappa)}{2\kappa^3},\\
	 \alpha_3=&\frac{\alpha(1+\bar{\alpha}-\kappa)\left(6+\alpha^2+\bar{\alpha}^2-3 \alpha(2+\bar{\alpha}-2 \kappa)-5 \bar{\alpha}(\kappa-1)-11 \kappa+6 \kappa^2\right)}{6\kappa^4},
  \end{align}
  \begin{align}
	 \alpha_4 =&\frac{1}{24\kappa^5} \alpha(1+\bar{\alpha}-\kappa)\left(\alpha^3-\bar{\alpha}^3-6 \alpha^2(2+\bar{\alpha}-2 \kappa)+9 \bar{\alpha}^2(\kappa-1)+\bar{\alpha}\left(-26+47 \kappa-26 \kappa^2\right)+\right.  \\
		& \left.\quad \alpha\left(34+6 \bar{\alpha}^2-29 \bar{\alpha}(\kappa-1)-63 \kappa+34 \kappa^2\right)+12\left(-2+5 \kappa-5 \kappa^2+2 \kappa^3\right)\right),\nonumber\\\label{al5}
	 \alpha_5=&\frac{1}{120\kappa^6} \alpha(1+\bar{\alpha}-\kappa)\left(\alpha^4+\bar{\alpha}^4-10 \alpha^3(2+\bar{\alpha}-2 \kappa)-
  14 \bar{\alpha}^3(\kappa-1)\right. \\
		& +5 \alpha^2\left(22+4 \bar{\alpha}^2-19 \bar{\alpha}(\kappa-1)-41 \kappa+22 \kappa^2\right)+ \bar{\alpha}^2\left(71-127 \kappa+71 \kappa^2\right)\nonumber \\
		& +\bar{\alpha}\left(154-377 \kappa+377 \kappa^2-154 \kappa^3\right)+ 4\left(30-91 \kappa+124 \kappa^2-91 \kappa^3+30 \kappa^4\right) \nonumber \\
		& -5 \alpha\left(42+2 \bar{\alpha}^3- 
		\left.17 \bar{\alpha}^2(\kappa-1)-107 \kappa+107 \kappa^2-42 \kappa^3+\bar{\alpha}\left(47-86 \kappa+47 \kappa^2\right)\right)\right),
  \nonumber
\end{align}
where $\alpha:= \kappa p + i q$. We are now in a position to specify the coefficients in (\ref{1.h}) up to and including index label $j=5$.

\begin{proposition}\label{P4.2}
 Let $\alpha_0,\dots, \alpha_5$ be given by (\ref{iS4})--(\ref{al5}). We have
 \begin{equation}
     h_j(\beta,p,q) = {i \over (2 \pi)^j }{\rm Im} \, \alpha_j, \: \: j \: {\rm even} \qquad 
 h_j(\beta,p,q) = {1 \over (2 \pi)^j }{\rm Re} \, \alpha_j, \: \: j \: {\rm odd}    
 \end{equation}
 and
 \begin{equation}\label{4.23}
     \tilde{h}_j(\beta,p,q) = {1 \over (2 \pi)^j }{\rm Re} \, \alpha_j, \: \: j \: {\rm even} \qquad 
 \tilde{h}_j(\beta,p,q) = {i \over (2 \pi)^j }{\rm Im} \, \alpha_j, \: \: j \: {\rm odd} .   
 \end{equation}
 Explicity, for indices up to and including $j=3$,
 \begin{align}
&h_0(\beta,p,q) = -{iq \over \kappa}, \quad
h_1(\beta,p,q) = {1 \over 2 \pi \kappa^2} (q^2 + \kappa p + \kappa^2(-1+p)p) , \label{4.24} \\ & h_2(\beta,p,q) =
{i q \over 4 \pi^2 \kappa^3} (-1+q^2+ \kappa(2+p) + \kappa^2(-1-p+p^2)), \\
 &h_3(\beta,p,q) =
{1 \over 48 \pi^2 \kappa^4}\Big (
   17 q^2 - 5 q^4 
 + \kappa (-33 q^2 - 6 p (-1 + q^2)) + 
  \kappa^2 (17 q^2 + p^2 (5 - 6 q^2)\\ & \qquad \qquad    +  p (-17 + 6 q^2))+
\kappa^3 p (17 - 9 p - 2 p^2) - \kappa^4 p (6 - 5 p - 2 p^2 + p^3) 
  \Big )\nonumber
\end{align}
and
\begin{align}\label{4.27}
&\tilde{h}_0(\beta,p,q) = - p, \: \: \tilde{h}_1(\beta,p,q) = {iq (1 - \kappa) \over 2 \pi \kappa^2}, \: \:
\tilde{h}_2(\beta,p,q) = -{(1 - \kappa)  \over 4 \pi^2 \kappa^3} (2 q^2 + \kappa p + \kappa^2(-1+p)p) ,  \\ \label{4.28} & 
\tilde{h}_3(\beta,p,q) = -{iq (1 - \kappa)  \over 48 \pi^3 \kappa^4}
(-6 +  
   16 q^2 +  \kappa  (11 + 12 p) + 6 \kappa^2 (-1 - 2 p + 2 p^2) ).
\end{align}
\end{proposition}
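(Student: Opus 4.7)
The plan is to combine the loop-equation expansion (\ref{E2a}) with the a priori form (\ref{1.h}) of the Fourier transform, and then to separate the resulting identity into real and imaginary parts using the parity constraints (\ref{1.2gz+}).

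First I would restrict (\ref{1.h}) to $\tau>0$ (where $\mathrm{sgn}\,\tau=1$), which gives
\begin{equation*}
c_\infty^{(\widetilde{\mathrm{cJ}})}(\tau;\beta,p,q)=\sum_{j=0}^\infty\bigl(h_j(\beta,p,q)+\tilde h_j(\beta,p,q)\bigr)\tau^j,
\end{equation*}
while the large-$N$ analysis of the triangular loop-equation system summarised in (\ref{E2a}) yields
\begin{equation*}
c_\infty^{(\widetilde{\mathrm{cJ}})}(\tau;\beta,p,q)=\sum_{l=0}^\infty\alpha_l(\beta,p,q)\Bigl(\frac{\tau}{2\pi}\Bigr)^l
\end{equation*}
for the same range of $\tau$. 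Matching coefficients in these two small-$\tau$ expansions then shows that $h_j+\tilde h_j=\alpha_j/(2\pi)^j$ for every $j\geq 0$.

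Next I would invoke the reality constraints (\ref{1.2gz+}): $h_j$ is pure imaginary when $j$ is even and real when $j$ is odd, while $\tilde h_j$ is real when $j$ is even and pure imaginary when $j$ is odd. The decomposition of $\alpha_j/(2\pi)^j$ into real and imaginary parts therefore singles out $h_j$ and $\tilde h_j$ uniquely: for $j$ even, $h_j=(i/(2\pi)^j)\,\mathrm{Im}\,\alpha_j$ and $\tilde h_j=(1/(2\pi)^j)\,\mathrm{Re}\,\alpha_j$; for $j$ odd, $h_j=(1/(2\pi)^j)\,\mathrm{Re}\,\alpha_j$ and $\tilde h_j=(i/(2\pi)^j)\,\mathrm{Im}\,\alpha_j$. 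This is precisely the content of (\ref{4.23}) together with its counterpart for $h_j$ stated in the proposition. Consistency with the $\tau<0$ side of (\ref{1.h}) is automatic from the reality of $\rho_{(1),\infty}^{(\widetilde{\rm cJ})}$, which forces $c_\infty(-\tau)=\overline{c_\infty(\tau)}$ and hence matches the parity split above.

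Finally I would derive the explicit expressions by substituting $\alpha_0,\dots,\alpha_5$ from (\ref{iS4})--(\ref{al5}). Writing $\alpha=\kappa p+iq$ and $\bar\alpha=\kappa p-iq$, one expands each monomial $\alpha^m\bar\alpha^n$ in real monomials in $p$ and $q$ with $\kappa$-polynomial coefficients, and then collects the real part and $i$ times the imaginary part. For $j=0,1$ this is immediate; for $j=2,3$ it is a routine but involved algebraic simplification, best carried out via computer algebra, producing (\ref{4.24})--(\ref{4.28}). I expect the main bookkeeping obstacle to be exactly this collection step for $h_3$ and $\tilde h_3$, where $\alpha_3$ is cubic in $\alpha$ and $\bar\alpha$ with $\kappa$-dependent coefficients and the two factors $\alpha$ and $1+\bar\alpha-\kappa$ need to be multiplied out before the real/imaginary split can be read off. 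Once the parity pattern identified above is in place, however, all remaining work is purely computational.
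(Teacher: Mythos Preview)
Your proposal is correct and follows essentially the same approach as the paper: compare the expansion (\ref{E2a}) with (\ref{1.h}) for $\tau>0$ to obtain $h_j+\tilde h_j=\alpha_j/(2\pi)^j$, and then separate the two summands using the real/imaginary alternation recorded in the sentence below (\ref{1.2gz+}). The paper's own proof is the one-line ``compare (\ref{1.h}) and (\ref{E2a}), then use the property noted in the sentence below (\ref{1.2gz+})'', so you have simply spelled out the same argument in more detail.
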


\begin{proof}
    This follows by comparing (\ref{1.h}) and (\ref{E2a}), then making use of the property noted in the sentence below (\ref{1.2gz+}).
\end{proof}

\begin{remark}
1.~Based on the loop equations (\ref{4.5}) it has been shown in \cite{Fo22} that the Jacobi $\beta$ ensemble connected correlators $\overline{W}_n^{\rm J}$ satisfy the functional equation
$$
\overline{W}_n^{\rm J}(x_1,\dots,x_n;N,\kappa,\lambda_1, \lambda_2) =
(-\kappa)^{-n}\overline{W}_n^{\rm J}(x_1,\dots,x_n;N,1/\kappa,-\kappa \lambda_1,-\kappa \lambda_2),
$$
where as in Proposition \ref{P4.2}, $\kappa := \beta/2$.
The mapping (\ref{1.6b}) then induces the functional equation for the circular Jacobi $\beta$ ensemble connected correlators $\overline{W}_n^{\widetilde{\rm cJ}}$,
$$
\overline{W}_n^{\widetilde{\rm cJ}}
(x_1,\dots,x_n;N,\kappa,p,q) = (-\kappa)^{-n} \overline{W}_n^{\widetilde{\rm cJ}}
(x_1,\dots,x_n;-\kappa N,1/\kappa,-\kappa p,q/\kappa).
$$
Now setting $n=1$, expanding in $1/N$ as in (\ref{E2}), then proceeding from this to (\ref{E2a}) gives the functional equation
$$
c_\infty^{\widetilde{\rm cJ}}
(\tau;\beta,p,q) = - {2 \over \beta}
c_\infty^{\widetilde{\rm cJ}}
(-2\tau/\beta;4/\beta,-\beta p/2,-2q/\beta)
$$
or equivalently
\begin{equation}\label{4.30}
     h_j
(\beta,p,q) = \Big ( - {2 \over \beta} \Big )^{j+1}
h_j
\Big ({4 \over \beta},-{\beta p \over 2},-{ 2q \over \beta}\Big ), \quad \tilde{h}_j
(\beta,p,q) = \Big ( - {2 \over \beta} \Big )^{j+1}
\tilde{h}_j
\Big ({4 \over \beta},-{\beta p \over 2},-{2q \over \beta} \Big ).
\end{equation}
The functional equations (\ref{4.30}) are readily exhibited on the explicit functional forms (\ref{4.24})--(\ref{4.28}). \\
2.~In the case $q=0$ the density $\rho_{(1),\infty}^{(\rm cJ)}( x ;\beta,p,q)$ is an even function of $x$ and consequently $c_\infty^{(\widetilde{\rm cJ})}(\tau;\beta,p,q)$ is an even function of $\tau$. In the expansion (\ref{1.h}) we must then have that $h_j(\beta,p,q)|_{q=0}=0$ for $j$ even and $\tilde{h}_j(\beta,p,q)|_{q=0}=0$ for $j$ odd, which can also be viewed as corollaries of (\ref{1.2gz}). This is indeed a feature of the explicit functional forms (\ref{4.24})--(\ref{4.28}).
\end{remark}

\section{Discussion}\label{S5}
Setting $q=0$ we read off from (\ref{1.2g+}), (\ref{1.h}) and the results of Proposition \ref{P4.2} that
\begin{multline}\label{5.1}
\int_{-\infty}^\infty \Big ( \rho_{(1),\infty}^{(\rm cJ)}(x;\beta,p,q) |_{q=0} - 1 \Big ) e^{i \tau x} \, dx \\ = -p + {p \over 2 \pi}((1-\kappa) + \kappa p) {|\tau| \over \kappa} -
{p (1 - \kappa) \over 4 \pi^2}((1 - \kappa) + \kappa p) {\tau^2 \over \kappa^2} + {\rm O}(\tau^3).
\end{multline}
It is of interest to compare this expansion to the analogue for the two-dimensional one-component plasma (\ref{1.34}). With $Q$ identified as $p$, we see that the first term on the RHS, which corresponds to the perfect screening of the external charge, is the same for both. In two-dimensions the generalised Fourier transform of the logarithmic potential $V^{\rm OCP}(\mathbf r) = - \log | \mathbf r |$ is $\hat{V}^{\rm OCP}(\mathbf k) = 2 \pi/ |\mathbf k|^2$ and the dimensionless thermodynamic pressure is $\beta P^{\rm OCP} = 1 - \beta/4$. Hence the second term in the expansion (\ref{1.34}) can be written
\begin{equation}\label{qw1}
{Q \over \hat{V}^{\rm OCP}(\mathbf k) \beta} (
\beta P^{\rm OCP} + (\beta /4)Q).
\end{equation}
In relation to the circular $\beta$ ensemble (C$\beta$E), the generalised Fourier transform of the logarithmic potential $V^{{\rm C}\beta{\rm E}}(r) = - \log |  r |$ is $\hat{V}^{{\rm C}\beta{\rm E}}(\tau) =  \pi/ |\tau|$ and the dimensionless thermodynamic pressure is $\beta P^{{\rm C}\beta{\rm E}} = 1 - \beta/2$. Hence the second term in the expansion (\ref{5.1}) can be written
\begin{equation}\label{qw2}
{p \over \hat{V}^{{\rm C}\beta{\rm E}}(\tau) \beta} (
\beta P^{{\rm C}\beta{\rm E}} + (\beta /2)p).
\end{equation}
The structural similarity between (\ref{qw1}) and (\ref{qw2}) is evident. However when it comes to comparing the third terms on the RHSs of (\ref{5.1}) and (\ref{1.34}) structural differences show. Specifically the former is $Q$ times a quadratic polynomial in $Q$, while the latter is $p$ times a linear polynomial in $p$. This is perhaps not unsurprising as the argument used in \cite{Sa19} to predict the $Q^2$ term relies in part on the screening cloud having a fast decay with well defined moments, which is not true in one-dimension. The dependence on $\beta$ is quadratic in $1/\beta$ in both cases, although in (\ref{5.1}) there is a stand alone factor of $(1/\kappa - 1)$ with no analogue in (\ref{1.34}).

A linear response argument is used in \cite{Sa19} to deduce that
\begin{multline}
    \lim_{Q \to 0} {1 \over Q} \int_{\mathbb R^2} ( \rho_{(1),\infty}^{\rm OCP}(\mathbf r;Q) - 1) e^{i \mathbf k \cdot \mathbf r} \, d \mathbf r = 
    -\beta \tilde{V}^{\rm OCP}(\mathbf k) S_\infty^{\rm OCP}(\mathbf k;\beta) \\ =
   - \beta \tilde{V}^{\rm OCP}(\mathbf k) \Big (1 + 
    \int_{\mathbb R^2} ( \rho_{(1),\infty}^{\rm OCP}(\mathbf r;Q)|_{Q=1} - 1) e^{i \mathbf k \cdot \mathbf r} \, d \mathbf r \Big ),
\end{multline}
where the second equality follows from the two-dimensional analogue of (\ref{1.3c}). As noted in \cite{Sa19}, this sum rule is readily checked to be a feature of the expansion (\ref{1.ps}). The same linear response argument can be applied to the one-dimensional log-gas to obtain
\begin{multline}\label{5.5}
    \lim_{p \to 0} {1 \over p} \int_{-\infty}^\infty ( \rho_{(1),\infty}^{\rm OCP}(x;p,q)|_{q=0} - 1) e^{i \tau x} \, d x = 
    -\beta \tilde{V}^{{\rm C}\beta{\rm E}}(\tau) S_\infty^{{\rm C}\beta{\rm E} }(\tau;\beta) \\ =
   - \beta \tilde{V}^{{\rm C}\beta{\rm E}}(\tau) \Big (1 + 
   \int_{-\infty}^\infty ( \rho_{(1),\infty}^{\rm OCP}(x;p,q)|_{p=1\atop q=0} - 1) e^{i \tau x} \, d x
     \Big ),
\end{multline}
or equivalently, using (\ref{1.2g}) and (\ref{1.3c}), the relation
\begin{equation}\label{5.6}
    \lim_{p \to 0} {1 \over p} c_\infty^{(\widetilde{\rm cJ)}}(\tau;\beta,p,0) = - {\pi \beta \over |\tau|} \Big ( 1 + 
 c_\infty^{(\widetilde{\rm cJ)}}(\tau;\beta,1,0) \Big ).
\end{equation}
Using the results of Proposition \ref{P4.2} we can check the validity of (\ref{5.6}) in a small $\tau$ expansion up to and including order $\tau^4$. 

Linear response also applies in the limit $q \to 0$. To facilitate this,
define $V(x) = \pi {\rm sgn}(x)$. Noting as an improper integral that
$$
i \int_{-\infty}^\infty {e^{-ix \tau} \over \tau} \, d \tau = \pi {\rm sgn}(x)
$$
allows us to take the inverse Fourier transform to conclude $\hat{V}(\tau) = 2 \pi i/\tau$. 
With this knowledge we can deduce the analogues of (\ref{5.5}) and (\ref{5.6}). Specifically, in relation to the latter we obtain
\begin{equation}\label{5.7}
    \lim_{q \to 0} {1 \over q} c_\infty^{(\widetilde{\rm cJ)}}(\tau;\beta,0,q) = - {2 i  \over \tau} \Big ( 1 + 
 c_\infty^{(\widetilde{\rm cJ)}}(\tau;\beta,1,0) \Big ).
\end{equation}
As with (\ref{5.6}),
using the results of Proposition \ref{P4.2} we can check the validity of (\ref{5.7}) in a small $\tau$ expansion up to and including order $\tau^4$. 

In Proposition \ref{P4.2} let us  replace $q$ by $\kappa q$. Then $h_j$ and $\tilde{h}_j$ are each polynomials of degree $j$ in $1/\kappa$. As remarked in the Introduction below (\ref{1.17b}), in the case $p=1$, $q=0$ these polynomials have the property of being palindromic or anti-palindromic in $u:=1/\kappa$. Moreover, it has been observed from the explicit form of these polynomials that all the zeros lie on the unit circle $|u|=1$ in the complex $u$-plane and moreover exhibit an interlacing property \cite{FJM00,Fo21a}, which is a feature too of the polynomials appearing in the $1/N$ expansion of the moments of the spectral density for the Gaussian $\beta$ ensemble \cite{WF14}. (More on this has been communicated to the senior author by Michael A. La Croix, who among other things highlights the references \cite{DF17,GJ96,LC09}; one should add too the recent work \cite{No22}, although any sought of explanation  is still lacking.)
Our results of Proposition \ref{P4.2} reveal these same properties for the polynomials in $u:=1/\kappa$ which result as coefficients by expanding $h_j$ or $\tilde{h}_j$ in a power series in $p$ for $q=0$, or in $q$ for $p=0$. For example, from (\ref{al5}) and (\ref{al5}) we compute that in relation to $h_5(\beta,p,q)|_{q=0}$ the coefficients of $p,\dots,p^5$ in order are, up to proportionality, the (anti-)palindromic polynomials
\begin{align*}
    &(u-1)(30-91 u + 124 u^2 - 91 u^3 + 30 u^4), \quad
32-75u+90u^2-75u^3+32u^4, \\
&(u-1)(11-20u + 11u^2), \quad 5-9u+5u^2, \quad u-1,
\end{align*}
and which indeed can be checked to have successively interlacing zeros all on the unit circle in the complex $u$-plane.

The results of Proposition \ref{P4.2} take on a simpler form in the case $q=0$ with $\beta\to \infty$ (low temperature limit),
\begin{multline}
    \lim_{\beta \to \infty} c_\infty^{(\widetilde{\rm cJ})}(\tau;\beta,p,0) = - p +\tilde{p} {|\tau| \over 2 \pi} +
  \tilde{p} \Big ({\tau \over 2 \pi} \Big )^2 +
 \tilde{p} \Big ( 1 - {1 \over 6}  \tilde{p} \Big ) 
 \Big ( {|\tau| \over 2 \pi} \Big )^3 \\
 + \tilde{p} \Big ( 1 - {1 \over 3}  \tilde{p} \Big ) 
 \Big ( {\tau \over 2 \pi} \Big )^4 +
\tilde{p} \Big ( 1 - {7 \over 15}  \tilde{p} + {1 \over 60} \tilde{p}^2 \Big ) 
 \Big ( {\tau \over 2 \pi} \Big )^5 + \cdots, 
\end{multline}
where $\tilde{p} := p(p-1)$. The factor of $\tilde{p}$ in all terms but the first makes sense as the ground state is an equally spaced lattice for both $p=0$ and $p=1$, which is when $\tilde{p}$ vanishes. Also, the leading term in $\tilde{p}$ of each power of $\tau$ is suggestive of the summation in $\tau$
\begin{equation}\label{eqc}
\lim_{\beta \to \infty} c_\infty^{(\widetilde{\rm cJ})}(\tau;\beta,p,0) + p
\mathop{\sim}\limits_{\tilde{p} \to 0}
{\tilde{p} |\tau|/(2 \pi) \over 1 - |\tau|/(2 \pi)},
\end{equation}
which makes explicit the form of the singularity in $\tau$ as $|\tau| \to (2\pi)^-$. In fact for $p \to 0$ (and thus $\tilde{p} \to -p \to 0$) this can be anticipated from the first equality in (\ref{5.5}). The required input is the large $\beta$ form of $S_\infty^{{\rm C} \beta {\rm E}}(\tau;\beta)$, which we know from \cite[\S 5]{FJM00} to be given by the RHS of (\ref{eqc}) with $\tilde{p}$ replaced by $2/\beta$. Substituting in the first equality of (\ref{5.5}) gives a result equivalent to (\ref{eqc}) for $p \to 0$.

\begin{remark}
With $q=0$ a result of \cite{FR86} tells us that with bulk spectrum singularity scaling, in the limit $\beta \to \infty$ the eigenvalues crystallise at the zeros of the Bessel function $J_{p-1/2}(\pi x)$.
\end{remark}
 \subsection*{Acknowledgements}
 The contribution of Menglin Wang in helping with some of the calculations in Section \ref{S2} is acknowledged. 
	This research is part of the program of study supported
	by the Australian Research Council 
	 Discovery Project grant DP210102887. In particular the grant partially supported the visit of Bo-Jian Shen to the University of Melbourne to work on this project.
  The research of Bo-Jian Shen is also supported by the National Natural Science Foundation of China (Grant No.12175155) and the Shanghai Frontier Research Institute for Modern Analysis.

   %\bibliographystyle{amsplain}
%\bibliography{book1t16}
\nopagebreak

\providecommand{\bysame}{\leavevmode\hbox to3em{\hrulefill}\thinspace}
\providecommand{\MR}{\relax\ifhmode\unskip\space\fi MR }
% \MRhref is called by the amsart/book/proc definition of \MR.
\providecommand{\MRhref}[2]{%
  \href{http://www.ams.org/mathscinet-getitem?mr=#1}{#2}
}
\providecommand{\href}[2]{#2}

\end{document}